\documentclass[journal]{IEEEtran}
\usepackage{amsthm,amssymb,amsmath,rangecite}
\usepackage{commath,amsmath,graphicx,epstopdf,amsthm,amssymb,float,cite,color,array,caption,subcaption}
\usepackage[ruled,vlined]{algorithm2e}
\newtheorem{Theorem}{Theorem}

\newtheorem{proposition}[Theorem]{Proposition}
\newtheorem{definition}{Definition}
\newtheorem{example}{Example}
\newtheorem{claim}[Theorem]{Claim}

\usepackage{lscape}
\usepackage{cite}
\usepackage{url}
\usepackage{hyperref}

\newcommand{\avec}{{\bf{a}}}
\newcommand{\bvec}{{\bf{b}}}

\newcommand{\evec}{{\bf{e}}}
\newcommand{\fvec}{{\bf{f}}}

\newcommand{\uvec}{{\bf{u}}}

\newcommand{\xvec}{{\bf{x}}}

\newcommand{\nvec}{{\bf{n}}}
\newcommand{\tvec}{{\bf{t}}}

\newcommand{\Tvec}{{\bf{T}}}

\newcommand{\gvec}{{\bf{g}}}

\newcommand{\etavec}{{\bf{\eta}}}

\newcommand{\onevec}{{\bf{1}}}
\newcommand{\zerovec}{{\bf{0}}}

\newcommand{\Amat}{{\bf{A}}}
\newcommand{\Bmat}{{\bf{B}}}
\newcommand{\Cmat}{{\bf{C}}}
\newcommand{\Dmat}{{\bf{D}}}

\newcommand{\Fmat}{{\bf{F}}}

\newcommand{\Hmat}{{\bf{H}}}
\newcommand{\Jmat}{{\bf{J}}}

\newcommand{\Imat}{{\bf{I}}}

\newcommand{\Pmat}{{\bf{P}}}

\newcommand{\Smat}{{\bf{S}}}
\newcommand{\Tmat}{{\bf{T}}}

\newcommand{\Umat}{{\bf{U}}}
\newcommand{\Vmat}{{\bf{V}}}
\newcommand{\Wmat}{{\bf{W}}}

\newcommand{\define}{\stackrel{\triangle}{=}}


\newcommand{\Pimat}{\mbox{\boldmath $\Pi$}}

\def\Pimatsmall{{\mbox{\boldmath {\scriptsize $\Pi$}}}}




\def\upsilonvec{{\mbox{\boldmath $\upsilon$}}}

\def\xivec{{\mbox{\boldmath $\xi$}}}

\def\etavec{{\mbox{\boldmath $\eta$}}}

\def\thetavec{{\mbox{\boldmath $\theta$}}}

\def\lambdavec{{\mbox{\boldmath $\lambda$}}}

\def\thetavecsmall{{\mbox{\boldmath {\scriptsize $\theta$}}}}

\newcommand{\be}{\begin{equation}}
\newcommand{\ee}{\end{equation}}
\newcommand{\beqna}{\begin{eqnarray}}
\newcommand{\eeqna}{\end{eqnarray}}


\usepackage{ifthen}
\usepackage{comment}
\begin{document}
\title{Barankin-Type Bound for Constrained Parameter Estimation}
\author{Eyal~Nitzan,~\IEEEmembership{Member,~IEEE,}
Tirza~Routtenberg,~\IEEEmembership{Senior~Member,~IEEE,}
and~Joseph~Tabrikian,~\IEEEmembership{Fellow,~IEEE}
\thanks{This research was partially supported by THE ISRAEL SCIENCE FOUNDATION (grant No. 2666/19 and grant No. 1148/22).}

\thanks{{\footnotesize{
E. Nitzan, T. Routtenberg, and J. Tabrikian are with the School of Electrical and Computer Engineering, Ben-Gurion University of the Negev, Beer-Sheva, Israel, e-mails: tirzar@bgu.ac.il, joseph@bgu.ac.il. \\
© 2023 IEEE. Personal use of this material is permitted. Permission from
IEEE must be obtained for all other uses, in any current or future media,
including reprinting/republishing this material for advertising or promotional
purposes, creating new collective works, for resale or redistribution to servers
or lists, or reuse of any copyrighted component of this work in other works.
}}}
}

\maketitle
\nopagebreak

\begin{abstract}
In constrained parameter estimation, the classical constrained Cram$\acute{\text{e}}$r–Rao bound (CCRB) and the recent Lehmann-unbiased CCRB (LU-CCRB) are lower bounds on the performance of mean-unbiased and Lehmann-unbiased estimators, respectively. Both the CCRB and the LU-CCRB require differentiability of the likelihood function, which can be a restrictive assumption. Additionally, these bounds are local bounds that are inappropriate for predicting the threshold phenomena of the constrained maximum likelihood (CML) estimator. The constrained Barankin-type bound (CBTB) is a nonlocal mean-squared-error (MSE) lower bound for constrained parameter estimation that does not require differentiability of the likelihood function. However, this bound requires a restrictive mean-unbiasedness condition in the constrained set. In this work, we propose the Lehmann-unbiased CBTB (LU-CBTB) on the weighted MSE (WMSE). This bound does not require differentiability of the likelihood function and assumes uniform Lehmann-unbiasedness, which is less restrictive than the CBTB uniform mean-unbiasedness. We show that the LU-CBTB is tighter than or equal to the LU-CCRB and coincides with the CBTB for linear constraints. For nonlinear constraints the LU-CBTB and the CBTB are different and the LU-CBTB can be a lower bound on the WMSE of constrained estimators in cases, where the CBTB is not. In the simulations, we consider direction-of-arrival estimation of an unknown constant modulus discrete signal. In this case, the likelihood function is not differentiable and constrained Cram$\acute{\text{e}}$r–Rao-type bounds do not exist, while CBTBs exist. It is shown that the LU-CBTB better predicts the CML estimator performance than the CBTB, since the CML estimator is Lehmann-unbiased but not mean-unbiased.
\end{abstract}

\begin{IEEEkeywords}
Non-Bayesian constrained parameter estimation, estimation performance lower bounds, weighted mean-squared-error, Lehmann-unbiasedness, constrained Barankin-type bound
\end{IEEEkeywords}
\section{Introduction} \label{sec:Introduction}
Various signal processing applications require estimation of parameter vectors under given parametric constraints \cite{Hero_constraint,Stoica_Ng,HERO_EMISSION,SADLER_sim,WIJNHOLDS_sim,ROUTTENBERG_sim,HERO_HYPER,MENNI_sim,SHIR_PARAMETRIC,PREVOST_CONSTRAINED_CRAMER}.
Lower bounds for non-Bayesian constrained parameter estimation are useful tools for performance analysis of estimators and for system design. The constrained Cram$\acute{\text{e}}$r-Rao bound (CCRB) \cite{Hero_constraint,Marzetta,Stoica_Ng,CCRB_complex,BenHaim,sparse_con,Moore_fitting,Moore_scoring,normC} is the most commonly-used performance bound in the constrained setting. However, it was recently shown in \cite{SOMEKH_LESHEM,JOURNAL_CONSTRAINTS_EYAL,SSP_EYAL} that the unbiasedness conditions of the CCRB are too restrictive and thus, it may be non-informative outside the asymptotic region. Less restrictive unbiasedness conditions for constrained parameter estimation, named C-unbiasedness conditions, were derived in \cite{JOURNAL_CONSTRAINTS_EYAL} by using the Lehmann-unbiasedness concept \cite{point_est,SAM2012constraint,PCRB,CYCLIC,BAR1,SELECTION,BAR_NB,SSP_EYAL,LETTER_NORM_EYAL,MEIR_ROUTTENBERG}. In addition, the Lehmann-unbiased CCRB (LU-CCRB) on the weighted mean-squared-error (WMSE) of locally C-unbiased estimators, was proposed in \cite{JOURNAL_CONSTRAINTS_EYAL}. It was shown, in some simulations, that in nonlinear constrained parameter estimation, the LU-CCRB provides a lower bound for the constrained maximum likelihood (CML) estimator performance, while the CCRB does not.\\
\indent
Cram$\acute{\text{e}}$r-Rao-type bounds suffer from two main drawbacks \cite{HAMMERSLEY,HCR,TTB1}. First, they require differentiability of the likelihood function that can be restrictive in some cases \cite{HASSIBI_BOYD,LAROSA_CHANGE_POINT}. Second, these bounds are based on local statistical information of the observations in the vicinity of the true parameters. Therefore, they may not be tight for low signal-to-noise ratio (SNR) or small number of observations. Barankin-type bounds (BTBs) \cite{BARANKIN,HAMMERSLEY,HCR,MS,REUVEN_MESSER,TTB1} can be used to provide informative performance benchmarks in cases, where Cram$\acute{\text{e}}$r-Rao-type bounds do not exist or are not sufficiently tight. A notable BTB is the Hammersley-Chapman-Robbins (HCR) bound \cite{HAMMERSLEY,HCR}, which is based on enforcing uniform mean-unbiasedness over the parameter space and maximizing the bound w.r.t. test-point vectors. Based on this BTB, the constrained BTB (CBTB) was proposed in \cite{Hero_constraint}, where the test-point vectors of the CBTB are taken only from the constrained set. However, this bound was not implemented directly in \cite{Hero_constraint} and was only used to obtain the CCRB as a special case. CBTB for the special case of sparse parameter estimation in linear models was derived in \cite{JUNG_SPARSE} under the assumption of mean-unbiasedness in the set of sparse vectors. Thus, the existing CBTBs require mean-unbiasedness in the constrained set that can be too restrictive for practical estimators \cite{SOMEKH_LESHEM,JOURNAL_CONSTRAINTS_EYAL}. Consequently, a CBTB under mild unbiasedness conditions can be very useful.\\
\indent
In this paper, we derive a novel CBTB on the WMSE. This bound, named Lehmann-unbiased CBTB (LU-CBTB), is derived under the C-unbiasedness conditions from \cite{JOURNAL_CONSTRAINTS_EYAL}. The LU-CBTB is derived by using test-point vectors taken only from the constrained set and by enforcing the C-unbiasedness condition at these test-point vectors. The C-unbiasedness conditions required by the LU-CBTB are less restrictive than the mean-unbiasedness conditions required by the CBTB from \cite{Hero_constraint}. Thus, there may be cases where the LU-CBTB is a valid lower bound on the WMSE of constrained estimators, such as the CML estimator, while the CBTB is higher than the WMSE of these estimators. Given the test-point vectors, we show that the CBTB on the WMSE is higher than or equal to the proposed LU-CBTB and that the LU-CBTB coincides with the CBTB under linear constraints. It is also shown that the LU-CBTB is always tighter than or equal to the LU-CCRB from \cite{JOURNAL_CONSTRAINTS_EYAL}, since the LU-CCRB can be obtained as a special case of the LU-CBTB. In the simulations, we consider the problem of direction-of-arrival (DOA) estimation of an unknown constant modulus (CM) discrete signal. In this case, the likelihood function is not differentiable with respect to (w.r.t.) some of the unknown parameters and constrained Cram$\acute{\text{e}}$r–Rao-type bounds cannot be used. The LU-CBTB and the CBTB are evaluated and compared to the WMSE of the CML estimator, which is shown to be a C-unbiased estimator but not mean-unbiased. Thus, in the considered example the LU-CBTB is a lower bound on the CML WMSE, while the CBTB is not. In addition, it is shown that the LU-CBTB better captures the CML error behavior than the CBTB.\\
\indent
The WMSE is a scalar risk measure for multiparameter estimation. In contrast to the CCRB and CBTB, which are matrix lower bounds, the proposed bound cannot be formulated as an MSE matrix inequality. A matrix lower bound is useful, because it provides a lower bound for estimating any linear combination of the parameter vector. Specifically, it allows to obtain a bound for any of the elements of the parameter vector. In the literature, there exist several scalar bounds, (see e.g. \cite{ziv1969some,PCRB,CYCLIC,Aharon_Tabrikian}). In such cases, one may derive a bound for any given linear combination of the MSE matrix, and the bound may depend on the linear combination matrix. Although it results in a scalar bound, it allows to obtain a bound on the MSE of {\it any} linear combination of the parameter vector or any linear combination of the MSE matrix, such as the trace of the MSE matrix. This approach was adopted in \cite{EXTENDED_ZZ,JOURNAL_CONSTRAINTS_EYAL}. 

\indent
Throughout this paper, we denote vectors by boldface lowercase letters and matrices by boldface uppercase letters. The $m$th element of the vector $\avec$, a subvector of $\avec$ with indices $l,l+1,\ldots,l+M$, and the $(m,n)$th element of the matrix $\Amat$ are denoted by $a_m$, $[\avec]_{l:l+M}$, and $A_{m,n}$, respectively. The identity matrix of dimension $M\times M$ is denoted by $\Imat_M$, its $m$th column is denoted by $\evec_m^{(M)}$, $\forall m=1,\ldots,M$, and $\onevec$ denotes a vector of all ones.
The notation $\zerovec$ stands for vector/matrix of zeros. The notations ${\rm{Tr}}(\cdot)$ and ${\rm{vec}}(\cdot)$ denote the trace and vectorization operators, where the vectorization operator stacks the columns of its input matrix into a column vector. The notation ${\rm{diag}}(\cdot)$ is a diagonal matrix, whose diagonal elements are given by the arguments. The notations $(\cdot)^T$, $(\cdot)^{-1}$, and $(\cdot)^{\dagger}$ denote the transpose, inverse, and Moore-Penrose pseudo-inverse, respectively. The square root of a positive semidefinite matrix, $\Amat$, is denoted by $\Amat^{\frac{1}{2}}$. The column and null spaces of a matrix are denoted by ${\mathcal{R}}(\cdot)$ and ${\mathcal{N}}(\cdot)$, respectively. The matrices $\Pmat_\Amat=\Amat\Amat^{\dagger}=\Amat(\Amat^T\Amat)^{\dagger}\Amat^T$ and $\Pmat_\Amat^\bot=\Imat_M-\Pmat_\Amat$ are the orthogonal projection matrices onto ${\mathcal{R}}(\Amat)$ and ${\mathcal{N}}(\Amat^T)$, respectively \cite{CAMPBELL}. The notation $\Amat\otimes\Bmat$ is the Kronecker product of the matrices $\Amat$ and $\Bmat$. The gradient of a vector function $\gvec$ of $\thetavec$, $\nabla_{\thetavecsmall}\gvec(\thetavec)$, is a matrix whose $(m,n)$th element is $\frac{{\partial}g_m(\thetavecsmall)}{{\partial}\theta_n}$. The real and imaginary parts of an argument are denoted by $\text{Re}\{\cdot\}$ and $\text{Im}\{\cdot\}$, respectively, and $j\define{\sqrt{-1}}$. The notation $\angle{\cdot}$ stands for the phase of a complex scalar, which is assumed to be restricted to the interval $[-\pi,\pi)$.\\
\indent
The remainder of the paper is organized as follows. In Section \ref{sec:constrained estimation Problem}, we present the constrained estimation model and relevant background for this paper. The LU-CBTB and its properties are derived in Section \ref{sec:LU-CBTB}. Our simulations appear in Section \ref{sec:Simulations} and in Section \ref{sec:Conclusion} we give our conclusions.

\section{Constrained estimation problem} \label{sec:constrained estimation Problem}
This section provides the necessary background and definitions required for our main contribution, which is a new lower bound presented in the next section. We first introduce the model of constrained estimation in Subsection \ref{model_subsec}. In Subsection \ref{Classical_bounds_subsec}, we discuss the CBTB and CCRB on the MSE matrix and their adaptation to WMSE lower bounds. In Subsection \ref{unbiasedness_subsec}, we address C-unbiasedness (also known as Lehmann-unbiasedness under constraints) \cite{JOURNAL_CONSTRAINTS_EYAL}. Finally, in Subsection \ref{LU_CCRB_subsec}, we present the LU-CCRB on the WMSE, which was developed in our previous work \cite{JOURNAL_CONSTRAINTS_EYAL}.
\subsection{Constrained model}
\label{model_subsec}
Let $(\Omega_\xvec,{\cal{F}},P_\thetavecsmall)$ denote a probability space, where $\Omega_\xvec$ is the observation space, ${\cal{F}}$ is the $\sigma$-algebra on $\Omega_\xvec$, and $\left\{P_\thetavecsmall \right\}$ is a family of probability measures parameterized by the deterministic unknown parameter vector $\thetavec\in{\mathbb{R}}^M$. Each probability measure, $P_\thetavecsmall$, is assumed to have an associated probability density function, $f_\xvec(\xvec;\thetavec)$, whose support w.r.t. $\xvec$ is independent of $\thetavec$. The expectation operator w.r.t. $P_\thetavecsmall$ is denoted by ${\rm{E}}[\cdot;\thetavec]$.\\
\indent
We suppose that $\thetavec$ is restricted to the set
\be \label{set_def}
\Theta_\fvec=\{\thetavec\in{\mathbb{R}}^M:\fvec(\thetavec)=\zerovec\},
\ee
where $\fvec:{\mathbb{R}}^M\rightarrow {\mathbb{R}}^K $ is a continuously differentiable vector-valued function. It is assumed that $0<K<M$ and that the matrix $\Fmat(\thetavec)\define\nabla_{\thetavecsmall}\fvec(\thetavec)\in{\mathbb{R}}^{K\times M}$ has a rank $K$ for any $\thetavec\in\Theta_\fvec$, i.e. the constraints are not redundant. Thus, for any $\thetavec\in\Theta_\fvec$ there exists a matrix $\Umat(\thetavec)\in{\mathbb{R}}^{M\times(M-K)}$, such that
\be \label{one}
\Fmat(\thetavec)\Umat(\thetavec)=\zerovec
\ee
and
\be \label{two}
\Umat^T(\thetavec)\Umat(\thetavec)=\Imat_{M-K}.
\ee
An estimator of $\thetavec$ based on a random observation vector $\xvec\in\Omega_\xvec$ is denoted by $\hat{\thetavec}$, where $\hat{\thetavec}:\Omega_\xvec\rightarrow{\mathbb{R}}^M$.\\
\indent
MSE lower bounds on vector parameters are usually presented in a matrix inequality form. The advantage of such a matrix bound is that it provides a lower bound for {\it any} linear combination of the estimation error vector. Such a bound can be useful if, under some unbiasedness restrictions, the MSE matrix of the optimal estimator can lower bound the MSE matrix of any estimator. However, in many cases, such an optimal estimator in a matrix sense does not exist. This problem can be handled by optimizing the MSE of a given linear combination of the parameter vector. In such cases, a matrix lower bound, which is independent of the linear combination, cannot accurately predict the MSE matrix of any estimator. In order to avoid compromising the tightness of the bound, in this paper, we consider a weighted squared-error (WSE) cost function \cite{JOURNAL_CONSTRAINTS_EYAL,PILZ,ELDAR_WEIGHTED_MSE},
\begin{equation}\label{WSE}
C_{\text{WSE}}(\hat{\thetavec},\thetavec)\define(\hat{\thetavec}-\thetavec)^T\Wmat(\hat{\thetavec}-\thetavec),
\end{equation}
where $\Wmat\in\mathbb{R}^{M\times M}$ is a symmetric, positive semidefinite weighting matrix.
The WMSE risk is obtained by taking the expectation of \eqref{WSE} and is given by
\be\label{WMSE}
{\text{WMSE}}_{\hat{\thetavecsmall}}(\thetavec)\define{\rm{E}}[C_{\text{WSE}}(\hat{\thetavec},\thetavec);\thetavec]={\rm{E}}[(\hat{\thetavec}-\thetavec)^T\Wmat(\hat{\thetavec}-\thetavec);\thetavec].
\ee
The WMSE \cite{PILZ,ELDAR_WEIGHTED_MSE} is a family of scalar risks for estimation of an unknown parameter vector, where for each $\Wmat$ we obtain a different risk.  Therefore, the WMSE 
allows flexibility in the design of estimators and the derivation of performance bounds. 
 For example,  for $\Wmat=\Imat_M$ 
we obtain the special case of the trace of the MSE matrix criterion. Another example is when one may wish to consider the estimation of each element of the unknown parameter vector separately. Moreover, $\Wmat$ can compensate for possibly different units of the parameter vector elements. Another example is the estimation in the presence of nuisance parameters, where we are only interested in the MSE for estimation of a subvector of the unknown parameter vector (see e.g. \cite{SSP_EYAL}, \cite[p. 461]{point_est}) and thus, $\Wmat$ includes zero elements for the nuisance parameters, and ones for the parameters of interest. For example, if one is only interested in estimation $\theta_1$, then $\Wmat={\rm{diag}}([1,0,\ldots,0]^T)$.
Finally, by taking $\Wmat=\avec \avec^T$, where $\avec\in\mathbb{R}^{M}$ is an arbitrary vector, we can obtain any linear combination of the estimation errors, since in this case, the WMSE from \eqref{WMSE} is related to  the MSE matrix:
\begin{align}
\label{WMSE_for_aa}
{\text{WMSE}}_{\hat{\thetavecsmall}}(\thetavec)=& {\rm{E}}[(\hat{\thetavec}-\thetavec)^T\avec\avec^T(\hat{\thetavec}-\thetavec);\thetavec]\nonumber \\
= &\avec^T {\rm{E}}\left[(\hat{\thetavec}-\thetavec) (\hat{\thetavec}-\thetavec)^T;\thetavec \right] \avec.
\end{align}

\subsection{Background: conventional CBTB and CCRB}
\label{Classical_bounds_subsec}
In this subsection, we present the conventional CBTB and CCRB on the MSE matrix and their adaptation to the WMSE. 

Let $\thetavec_1,\ldots,\thetavec_P\in\mathbb{R}^M$ denote $P$ test-point vectors 
and define the test-point matrix 
\be\label{test_point_matrix}
\Pimat\define[\thetavec_1,\ldots,\thetavec_P]\in\mathbb{R}^{M\times P}. 
\ee
The HCR-based CBTB from \cite{Hero_constraint} is a lower bound on the MSE of mean-unbiased estimators where the mean-unbiasedness is assumed uniformly over the constrained set. This lower bound is defined as the conventional BTB with test-point vectors that are taken only from the constrained parameter space, $\Theta_\fvec$, rather than the entire parameter space, $\mathbb{R}^M$. Thus, we consider $\Pimat\in\Theta_\fvec^P$ rather than $\mathbb{R}^{M\times P}$ and the CBTB is given by

\beqna\label{CBTBmat}
\Bmat_{\text{CBTB}}(\thetavec)\hspace{6cm}\nonumber\\
\define\underset{\Pimatsmall\in\Theta_\fvec^P}{\sup}\left\{\Tmat(\thetavec,\Pimat)\left(\Bmat(\thetavec,\Pimat)-\onevec \onevec^T\right)^{\dagger}\Tmat^T(\thetavec,\Pimat)\right\},
\eeqna
where 
\be\label{T_BTB}
\Tmat(\thetavec,\Pimat)\define[\thetavec_1-\thetavec,\ldots,\thetavec_P-\thetavec]\in\mathbb{R}^{M\times P}.
\ee
The $( m,n)$th element of the matrix $\Bmat(\thetavec,\Pimat)\in\mathbb{R}^{P\times P}$ is given by
\be\label{Bmat_BTB_elements}
B_{ m,n}(\thetavec,\Pimat)={\rm{E}}\left[\frac{f_\xvec(\xvec;\thetavec_m)}{f_\xvec(\xvec;\thetavec)}\frac{f_\xvec(\xvec;\thetavec_n)}{f_\xvec(\xvec;\thetavec)};\thetavec\right],
\ee
$\forall m,n=1,\ldots,P$.

Let us define the gradient of the log-likelihood function 
\be\label{l_define}
\upsilonvec(\xvec,\thetavec)\define\nabla_{\thetavecsmall}^T\log f_\xvec(\xvec;\thetavec)
\ee
and the Fisher information matrix
\be\label{FIM}
\Jmat(\thetavec)\define{\rm{E}}\left[\upsilonvec(\xvec,\thetavec)\upsilonvec^T(\xvec,\thetavec);\thetavec\right]. 
\ee
Then, the CCRB is given by \cite{Stoica_Ng,BenHaim}
\be \label{CCRBmat}
\Bmat_{\text{CCRB}}(\thetavec)\define\Umat(\thetavec)\left(\Umat^T(\thetavec)\Jmat(\thetavec)\Umat(\thetavec)\right)^{\dagger}\Umat^T(\thetavec).
\ee 
The existence of the CCRB requires differentiability of the likelihood function that may not be satisfied \cite{HAMMERSLEY,HASSIBI_BOYD}. If the CCRB exists, then it can be obtained as a special case of the CBTB \cite{Hero_constraint}. The CBTB and the CCRB from \eqref{CBTBmat} and \eqref{CCRBmat}, respectively, are lower bounds for estimators that have zero mean-bias in the constrained set, which can be very restrictive and may not be satisfied by the commonly-used CML estimator, as shown in \cite{JOURNAL_CONSTRAINTS_EYAL,SOMEKH_LESHEM,LETTER_NORM_EYAL}. 

As any MSE matrix lower bound, the CBTB and the CCRB can be formulated as lower bounds on the WMSE from \eqref{WMSE} for any symmetric, positive semidefinite weighting matrix, $\Wmat$:
\be\label{CMS_W_WMSE}
{\text{WMSE}}_{\hat{\thetavecsmall}}(\thetavec)\geq B_{\text{CBTB}}(\thetavec,\Wmat)
\ee
and
\be\label{CCRB_W_WMSE}
{\text{WMSE}}_{\hat{\thetavecsmall}}(\thetavec)\geq B_{\text{CCRB}}(\thetavec,\Wmat),
\ee
respectively, where
\beqna\label{CMS_W}
B_{\text{CBTB}}(\thetavec,\Wmat)\define\hspace{5.25cm}
\nonumber\\\underset{\Pimatsmall\in\Theta_\fvec^P}{\sup}{\rm{Tr}}\bigg(\Tmat(\thetavec,\Pimat)
\left(\Bmat(\thetavec,\Pimat)-\onevec \onevec^T\right)^{\dagger}
\Tmat^T(\thetavec,\Pimat)\Wmat\bigg)
\eeqna
and
\be \label{CCRB_W}
\begin{split}
&B_{\text{CCRB}}(\thetavec,\Wmat)\define{\text{Tr}}\left(\Bmat_{\text{CCRB}}(\thetavec)\Wmat\right)\\
&={\text{Tr}}\left(\left(\Umat^T(\thetavec)\Jmat(\thetavec)\Umat(\thetavec)\right)^{\dagger}\left(\Umat^T(\thetavec)\Wmat\Umat(\thetavec)\right)\right).
\end{split}
\ee
\indent
\subsection{C-unbiasedness}
\label{unbiasedness_subsec}
In the following, we present the C-unbiasedness definition from our previous work \cite{JOURNAL_CONSTRAINTS_EYAL}, which is a Lehmann-unbiasedness \cite{point_est} under the WMSE risk and the parametric equality constraints from \eqref{set_def}.
In \cite{JOURNAL_CONSTRAINTS_EYAL} we used
the {\it local} C-unbiasedness condition for the development of the  LU-CCRB. In this paper, we use
the {\it uniform} C-unbiasedness condition for the development of the LU-CBTB.
The C-unbiasedness is less restrictive than the mean-unbiasedness and may be satisfied by practical estimators even in nonlinear constrained estimation problems and outside the asymptotic region.

Lehmann \cite{point_est,LEHMANN_CONCEPT} proposed a generalization of the unbiasedness concept, which depends on the considered cost function and on the parameter space, as presented in the following definition.
\begin{definition}
\label{unbiased_definition}
The estimator $\hat{\thetavec}$ is said to be a uniformly unbiased estimator of $\thetavec$ in the Lehmann sense \cite{point_est,LEHMANN_CONCEPT} w.r.t. the cost function $C(\hat{\thetavec},\thetavec)$ 
if 
\be \label{defdef}
{\rm{E}}[C(\hat{\thetavec},\etavec);\thetavec] \geq {\rm{E}}[C(\hat{\thetavec},\thetavec);\thetavec],~\forall \etavec,\thetavec\in \Omega_\thetavecsmall,
\ee
where $\Omega_\thetavecsmall$ is the considered parameter space.
\end{definition}
The Lehmann-unbiasedness definition implies that an estimator is unbiased if, on average, it is  ``closer''  to the true parameter, $\thetavec$, than to any other value in the parameter space, $\etavec\in\Omega_\thetavecsmall$. The measure of closeness between the estimator and the parameter is the cost function, $C(\hat{\thetavec},\thetavec)$. For example, in \cite{LEHMANN_CONCEPT}, it is shown that under the scalar squared-error cost function, $C(\hat{\theta},\theta)=(\hat{\theta}-\theta)^2$, the Lehmann-unbiasedness in \eqref{defdef} is reduced to the conventional mean-unbiasedness,
${\rm{E}}[\hat{\theta}-\theta]=0$, $\forall\theta\in {\Omega_\theta}$. Lehmann-unbiasedness conditions have been used in the literature with various cost functions (see, e.g., \cite{LEHMANN_CONCEPT,PCRB,BAR1,CYCLIC,SELECTION}).
In particular, for the constrained parameter estimation problem described in Subsection \ref{model_subsec} the WSE cost function from \eqref{WSE}, the Lehmann unbiasedness can be described by the C-unbiasedness, described in the following proposition.
\begin{proposition}\label{Cunbias_prop}
Given WSE cost function with a positive semidefinite weighting matrix $\Wmat\in\mathbb{R}^{M\times M}$, a necessary condition for the estimator $\hat{\thetavec}:\Omega_\xvec\rightarrow{\mathbb{R}}^M$ to be a uniformly unbiased estimator of $\thetavec\in{\mathbb{R}}^M$ in the Lehmann sense under the constrained set in \eqref{set_def} is
\be \label{unbiased_cond_nec}
\Umat^T(\thetavec)\Wmat {\rm{E}}[\hat{\thetavec}-\thetavec;\thetavec]=\zerovec,~\forall \thetavec\in\Theta_\fvec,
\ee
\end{proposition} 
\begin{proof} 
The full detailed proof of this proposition can be found in our previous work (see proof of Proposition 1 in \cite{JOURNAL_CONSTRAINTS_EYAL}), where it is shown that by 
substituting $\Omega_\thetavecsmall=\Theta_\fvec$ and the WSE cost function from \eqref{WSE} in \eqref{defdef}, and using tools from constrained minimization  \cite{BETTS},  one obtains that \eqref{unbiased_cond_nec} is the uniformly Lehmann unbiasedness for constrained parameter estimation.
\end{proof}
In the following, 
\color{black}
given a positive semidefinite weighting matrix $\Wmat\in\mathbb{R}^{M\times M}$, 
we say that the estimator $\hat{\thetavec}$ is a uniformly C-unbiased estimator of $\thetavec\in\Theta_\fvec$ if  it satisfies 
\eqref{unbiased_cond_nec}.

\noindent
The columns of the matrix $\Umat(\thetavec)$ span the feasible directions of the constrained set \cite{JOURNAL_CONSTRAINTS_EYAL,BenHaim}. Thus, the C-unbiasedness definition from \eqref{unbiased_cond_nec} implies that at any point, $\thetavec\in\Theta_\fvec$, only the components of the weighted bias vector, $\Wmat{\rm{E}}[\hat{\thetavec}-\thetavec;\thetavec]$, in the feasible directions of the constrained set must be zero, rather than the entire weighted bias vector.
It can be seen that if an estimator has zero mean-bias in the constrained set, {\it i.e.} ${\rm{E}}[\hat{\thetavec}-\thetavec;\thetavec]=\zerovec$, $\forall\thetavec\in \Theta_\fvec$, then it satisfies \eqref{unbiased_cond_nec} but not vice versa. Thus, the uniform C-unbiasedness condition is a weaker condition than requiring uniform mean-unbiasedness in the constrained set.
\\
\indent
\begin{example} 
For the special case of an unconstrained estimation problem, in which  $K=0$,  the matrix $\Umat(\thetavec)$ from \eqref{one} and \eqref{two} is an identity matrix, i.e. $\Umat(\thetavec)=\Imat_M$, and the parameter space is $\Theta_\fvec={\mathbb{R}}^M$.
If we use $\Wmat=\Imat_M$,  where the WMSE is reduced to the trace of the MSE matrix, then the uniform C-unbiasedness requirement in \eqref{unbiased_cond_nec} is reduced to the conventional uniform mean-unbiasedness:
\[
{\rm{E}}[\hat{\thetavec}-\thetavec;\thetavec]=\zerovec,~\forall \thetavec\in{\mathbb{R}}^M.
\]
Thus, the C-unbiasedness, which is the Lehmann-unbiasedness under constraints,  is the generalization of the conventional uniform mean-unbiasedness for the constrained setting.
\end{example}

\begin{example} 
\color{black} In order to illustrate the C-unbiasedness restriction, which is milder than the conventional mean-unbiasedness, we consider $M=2$, $\Wmat=\Imat_2$, and the circular constraint with radius, $R$, given by $f(\thetavec)=\theta_1^2 + \theta_2^2 - R^2=0$. In this case, $\Umat(\thetavec)=[\theta_2,-\theta_1]^T$. Then, as illustrated in Fig. \ref{mean_bias}, the mean-unbiasedness, or equivalently, the zero mean-bias requirement, 
\begin{equation*}
{\rm{E}}[\hat{\thetavec}-\thetavec;\thetavec]=\zerovec, \forall \thetavec\in\Theta_\fvec, 
\end{equation*}
implies that at any point, $\thetavec\in\Theta_\fvec$, the components of the bias vector should be zero both in the feasible direction of the circular constraint and in the orthogonal direction. In contrast, as illustrated in Fig. \ref{Lehmann_bias}, the uniform C-unbiasedness, or equivalently, the zero C-bias requirement, 
\begin{equation*}
\theta_2{\rm{E}}[\hat{\theta}_1 - \theta_1] -\theta_1{\rm{E}}[\hat{\theta}_2 - \theta_2] = 0, \forall \thetavec\in\Theta_\fvec, 
\end{equation*}
is less restrictive and implies that at any point, $\thetavec\in\Theta_\fvec$, the component of the bias vector should be zero in the feasible direction of the circular constraint but may be nonzero in the orthogonal direction.
\end{example} 
\begin{figure}[hbt]
 \centering
 \begin{subfigure}[b]{0.3\textwidth}
 \centering
 \includegraphics[width=\textwidth]{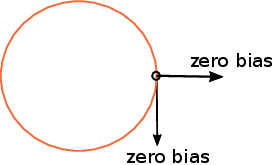}
 \caption{Zero mean-bias requirement for circular constraint.}
 \label{mean_bias}
 \end{subfigure}
 \hfill
 \begin{subfigure}[b]{0.3\textwidth}
 \centering
 \includegraphics[width=\textwidth]{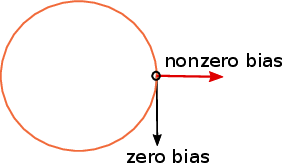}
 \caption{Zero C-bias requirement for circular constraint.}
 \label{Lehmann_bias}
\end{subfigure}
\caption{Different bias requirements for circular constraints.}
\label{mean_bias_Lehmann_bias}
\end{figure}

\subsection{LU-CCRB}
\label{LU_CCRB_subsec}
In this subsection, we present the LU-CCRB on the WMSE from our previous work \cite{JOURNAL_CONSTRAINTS_EYAL}.
\color{black}
By using the C-unbiasedness definition, we developed the LU-CCRB on the WMSE of locally C-unbiased estimators \cite{JOURNAL_CONSTRAINTS_EYAL}:
\be\label{LU_CCRB_W_WMSE}
{\text{WMSE}}_{\hat{\thetavecsmall}}(\thetavec)\geq B_{\text{LU-CCRB}}(\thetavec,\Wmat),
\ee
where
\be \label{total_CR_bound_CS}
\begin{split}
&B_{\text{LU-CCRB}}(\thetavec,\Wmat)\define {\rm{vec}}^T\left(\Umat^T(\thetavec)\Wmat\Umat(\thetavec)\right)\\
&\times\left((\Umat^T(\thetavec)\Wmat\Umat(\thetavec))\otimes(\Umat^T(\thetavec)\Jmat(\thetavec)\Umat(\thetavec))+\Cmat_{\Umat,\Wmat}(\thetavec)\right)^{\dagger}\\
&\times{\rm{vec}}\left(\Umat^T(\thetavec)\Wmat\Umat(\thetavec)\right),
\end{split}
\ee
\be\label{Gamma_block_define}
\begin{split}
&\Cmat_{\Umat,\Wmat}(\thetavec)\define\begin{bmatrix}
\Umat^T(\thetavec)\Vmat_1^T(\thetavec)\\
\vdots\\
\Umat^T(\thetavec)\Vmat_{M-K}^T(\thetavec)
\end{bmatrix}\\
&\times(\Wmat-\Wmat\Umat(\thetavec)(\Umat^T(\thetavec)\Wmat\Umat(\thetavec))^{\dagger}\Umat^T(\thetavec)\Wmat)\\
&\times\begin{bmatrix}
\Vmat_1(\thetavec)\Umat(\thetavec) & \ldots & \Vmat_{M-K}(\thetavec)\Umat(\thetavec)
\end{bmatrix},
\end{split}
\ee
\be\label{V_define}
\Vmat_m(\thetavec)\define\nabla_{\thetavecsmall}\uvec_m(\thetavec),
\ee
and $\uvec_m(\thetavec)$ is the $m$th column of $\Umat(\thetavec)$, which is assumed to be differentiable, $\forall m=1,\ldots,M-K$. It is shown in \cite{JOURNAL_CONSTRAINTS_EYAL} that for linear constraints and/or in the asymptotic region, the LU-CCRB from \eqref{total_CR_bound_CS} coincides with the corresponding CCRB from \eqref{CCRB_W}.

\section{LU-CBTB} \label{sec:LU-CBTB}
In this section, the LU-CBTB is derived in Subsection \ref{subsec:Derivation of LU-CBTB} and some of its properties are investigated in Subsection \ref{subsec:Properties of LU-CBTB}.

\subsection{Derivation of LU-CBTB} \label{subsec:Derivation of LU-CBTB}
In this section, we derive the LU-CBTB on the WMSE of C-unbiased estimators. For the derivation we define the matrix $\Dmat(\thetavec,\Wmat,\Pimat)\in\mathbb{R}^{(P+1)(M-K)\times (P+1)(M-K)}$, which is a block matrix whose $(m,n)$th block is given by
\beqna\label{Dmat_block_define}
\Dmat^{(m,n)}(\thetavec,\Wmat,\Pimat)\define \Umat^T(\thetavec_m)\Wmat\Umat(\thetavec_n)
\nonumber\hspace{2cm}\\
\times \left\{\begin{array}{ll}
  B_{m,n}(\thetavec,\Pimat)&{\text{if }} m\neq 0 ~\& \& ~n\neq 0
\\
1  & {\text{otherwise }}\end{array}
\right.,
\eeqna
$\forall m,n=0,\ldots,P$, where the elements of $\Bmat(\thetavec,\Pimat)$ are defined in \eqref{Bmat_BTB_elements}. In addition, we define the vector $\tvec(\thetavec,\Wmat,\Pimat)\in\mathbb{R}^{P(M-K)}$, which is composed of $P$ stacked subvectors, where the $m$th subvector is given by
\be\label{tvec_block_define}
\tvec^{(m)}(\thetavec,\Wmat,\Pimat)\define\Umat^T(\thetavec_m)\Wmat(\thetavec_m-\thetavec),
\ee
$\forall m=1,\ldots,P$. For the derivation of the LU-CBTB it is sufficient to require pointwise C-unbiasedness at $\thetavec$, as given by \eqref{unbiased_cond_nec}, and at the test-point vectors $\thetavec_1,\ldots,\thetavec_P\in\Theta_\fvec$, i.e.
\be \label{point_unbiased_cond_nec}
\Umat^T(\thetavec_m)\Wmat{\rm{E}}[\hat{\thetavec}-\thetavec_m;\thetavec_m]=\zerovec,~\forall m=0,\ldots,P,
\ee
which is a milder condition than the uniform C-unbiasedness from \eqref{unbiased_cond_nec}.
It should be noted that in \eqref{point_unbiased_cond_nec} we use the notation 
 $\thetavec_0\define\thetavec$.

In the following theorem, the LU-CBTB is presented. 
\begin{Theorem} \label{T3}
Let $\thetavec_1,\ldots,\thetavec_P\in\Theta_\fvec$ be test-point vectors arranged in a matrix $\Pimat$ from \eqref{test_point_matrix}, and assume that the elements of $\Dmat(\thetavec,\Wmat,\Pimat)$ in \eqref{Dmat_block_define} are finite. Let $\hat{\thetavec}$ be an estimator of $\thetavec$ satisfying \eqref{point_unbiased_cond_nec}, i.e., $\hat{\thetavec}$ is a C-unbiased estimator at $\thetavec$ and at the test-points. Then,
\be \label{total_CR_bound_CS_pre_W}
\begin{split}
&{\text{WMSE}}_{\hat{\thetavecsmall}}(\thetavec)\geq B_{\text{LU-CBTB}}(\thetavec,\Wmat)\\
&\define\underset{\Pimatsmall\in\Theta_\fvec^P}{\sup}\left\{\tvec^T(\thetavec,\Wmat,\Pimat)\left[ \Dmat^{\dagger}(\thetavec,\Wmat,\Pimat)\right]_{LR}\tvec(\thetavec,\Wmat,\Pimat)\right\},
\end{split}
\ee
where 
$[ \Dmat^{\dagger}(\thetavec,\Wmat,\Pimat)]_{LR}$ is the lower-right $P(M-K)\times P(M-K)$ block of $\Dmat^{\dagger}(\thetavec,\Wmat,\Pimat)$.
Equality in \eqref{total_CR_bound_CS_pre_W} is obtained {\em iff} the estimator $\hat{\thetavec}$ satisfies
\be \label{equality_cond_Prop_W}
\begin{split}
&\Wmat^{\frac{1}{2}}(\hat{\thetavec}-\thetavec)
\\&=\Wmat^{\frac{1}{2}}\sum_{m=1}^{P}\bigg(\Umat(\thetavec_m)\frac{f_\xvec(\xvec;\thetavec_m)}{f_\xvec(\xvec;\thetavec)}
\\
&~\times[[\Dmat^{\dagger}(\thetavec,\Wmat,\Pimat)]_{LR}\tvec(\thetavec,\Wmat,\Pimat)]_{((m-1)(M-K)+1):(m(M-K))}\bigg).
\end{split}
\ee
for the optimal choice of test-point vectors.
\end{Theorem}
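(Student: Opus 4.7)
The plan is to prove the bound via a generalized Cauchy-Schwarz (covariance) inequality, in the spirit of the classical Barankin derivation but exploiting the Lehmann-unbiasedness structure encoded by $\Umat(\thetavec_m)$ at each test point. First, I would use a change of measure to rewrite the pointwise C-unbiasedness conditions from \eqref{point_unbiased_cond_nec} as identities under the base law $P_\thetavecsmall$. Introducing the likelihood ratios $L_m(\xvec)\define f_\xvec(\xvec;\thetavec_m)/f_\xvec(\xvec;\thetavec)$ (with $L_0\equiv 1$) and exploiting ${\rm{E}}[L_m(\xvec);\thetavec]=1$, a short manipulation yields, for every $m=0,\ldots,P$,
\begin{equation*}
\Umat^T(\thetavec_m)\Wmat\,{\rm{E}}[(\hat{\thetavec}-\thetavec)L_m(\xvec);\thetavec]=\Umat^T(\thetavec_m)\Wmat(\thetavec_m-\thetavec)=\tvec^{(m)}(\thetavec,\Wmat,\Pimat).
\end{equation*}

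Next, for any deterministic block vector $\bvec=(\bvec_0^T,\ldots,\bvec_P^T)^T\in\mathbb{R}^{(P+1)(M-K)}$, I would introduce the auxiliary random vectors $X=\Wmat^{1/2}(\hat{\thetavec}-\thetavec)$ and $Y(\bvec)=\Wmat^{1/2}\sum_{m=0}^{P}\Umat(\thetavec_m)L_m(\xvec)\bvec_m$. Summing the $P+1$ identities above against $\bvec_m$ yields ${\rm{E}}[X^T Y(\bvec);\thetavec]=\tvec^T(\thetavec,\Wmat,\Pimat)\bvec$. The Cauchy-Schwarz inequality for square-integrable random vectors then gives
\begin{equation*}
(\tvec^T\bvec)^2 \leq {\rm{E}}[X^T X;\thetavec]\,{\rm{E}}[Y^T(\bvec)Y(\bvec);\thetavec]={\text{WMSE}}_{\hat{\thetavecsmall}}(\thetavec)\,\bvec^T\Dmat(\thetavec,\Wmat,\Pimat)\bvec,
\end{equation*}
since ${\rm{E}}[L_m L_n;\thetavec]=B_{m+1,n+1}(\thetavec,\Pimat)$ reproduces exactly the block structure of $\Dmat$ in \eqref{Dmat_block_define}.

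Finally, I would maximize over $\bvec$ via the generalized Rayleigh quotient identity $\sup_{\bvec}(\tvec^T\bvec)^2/(\bvec^T\Dmat\bvec)=\tvec^T\Dmat^{\dagger}\tvec$, and subsequently take the supremum over $\Pimat\in\Theta_\fvec^P$ to obtain \eqref{total_CR_bound_CS_pre_W}. For the equality claim, I would invoke the Cauchy-Schwarz equality condition: the bound is attained if and only if $X$ and $Y(\bvec^*)$ are proportional almost surely at the maximizer $\bvec^*=\Dmat^{\dagger}(\thetavec,\Wmat,\Pimat)\tvec(\thetavec,\Wmat,\Pimat)$. Evaluating the proportionality constant by plugging back into ${\rm{E}}[X^TY(\bvec^*);\thetavec]=\tvec^T\Dmat^{\dagger}\tvec={\rm{E}}[Y^T(\bvec^*)Y(\bvec^*);\thetavec]$ forces the constant to equal one, yielding \eqref{equality_cond_Prop_W} once the block decomposition of $\bvec^*$ is substituted.

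The main technical obstacle is the range condition $\tvec\in\mathcal{R}(\Dmat)$ needed for the pseudo-inverse to realize the supremum rather than produce an unbounded ratio. The cleanest remedy is to restrict the maximization to $\bvec\in\mathcal{R}(\Dmat)$ from the outset and observe that any component of $\bvec$ in $\mathcal{N}(\Dmat)$ contributes zero to both sides of the Cauchy-Schwarz inequality, so the restriction is without loss of generality. The rest is bookkeeping: keeping consistent block indexing among \eqref{Dmat_block_define}, \eqref{tvec_block_define}, and the stacked random vector $Y(\bvec)$, and verifying that the $m=0$ entries ($L_0\equiv 1$, $\tvec^{(0)}=\zerovec$) are consistent with the first-row/first-column normalization of $\Bmat(\thetavec,\Pimat)$ inherited from \eqref{Bmat_BTB_elements}.
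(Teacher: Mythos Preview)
Your proposal is correct and follows essentially the same approach as the paper: the paper's proof (Appendix~\ref{App_LU_CBTB_THEOREM}) applies Cauchy--Schwarz to precisely your pair $X=\Wmat^{1/2}(\hat{\thetavec}-\thetavec)$ and $Y(\bvec)=\Wmat^{1/2}\sum_m\Umat(\thetavec_m)L_m\bvec_m$, reduces the cross term to $(\lambdavec^T\tvec)^2$ via the same change-of-measure manipulation you describe, and then optimizes via $\lambdavec=\Dmat^{\dagger}\tvec$. The only cosmetic difference is in pinning down the proportionality constant in the equality condition: the paper fixes $\psi=1$ by imposing the C-unbiasedness constraints \eqref{point_unbiased_cond_nec} on the efficient estimator, whereas you deduce it from the identity ${\rm{E}}[X^TY(\bvec^*)]={\rm{E}}[Y(\bvec^*)^TY(\bvec^*)]=\tvec^T\Dmat^{\dagger}\tvec$; both arguments are valid and equivalent.
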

\begin{proof}
The proof is given in Appendix \ref{App_LU_CBTB_THEOREM}.
\end{proof} 
\noindent
 Theorem \ref{T3} describes a general class of lower bounds, where for each choice of a weighting matrix, $\Wmat$, we obtain a different bound.
In particular, 
the bound on the trace of the MSE matrix can be obtained by substituting 
$\Wmat=\Imat_M$ in  Theorem \ref{T3}.
Similarly, by substituting $\Wmat=\avec \avec^T$, where $\avec\in\mathbb{R}^{M}$ is an arbitrary vector, we can obtain a lower bound on any linear combination of the estimation errors, given by \eqref{WMSE_for_aa}.
It should be noted that the parametric constraints affect the CBTB from \eqref{CMS_W} only through the space of the test-point vectors defined in \eqref{set_def}, while the LU-CBTB from \eqref{total_CR_bound_CS_pre_W} employs the information about the constraints also through the matrix $\Umat(\thetavec)$ from \eqref{one} and \eqref{two}.

The following claim presents an approach to compute $[ \Dmat^{\dagger}(\thetavec,\Wmat,\Pimat)]_{LR}$ that is based on the Schur complement.
\begin{claim}
For invertible matrices $\Dmat(\thetavec,\Wmat,\Pimat)$ and $\Umat^T(\thetavec)\Wmat\Umat(\thetavec)$, the matrix $[ \Dmat^{\dagger}(\thetavec,\Wmat,\Pimat)]_{LR}\in\mathbb{R}^{P(M-K)\times P(M-K)}$ from the LU-CBTB in 
 \eqref{total_CR_bound_CS_pre_W}
can be obtained as the inverse of the  block matrix whose $(m,n)$th block of size $(M-K)\times (M-K)$ is given by
\be\label{Dmat_c_block_define_boddy}
\begin{split}
&\Umat^T(\thetavec_{m})\Wmat\Umat(\thetavec_{n})B_{m,n}\\
&~~~-\Umat^T(\thetavec_{m})\Wmat\Umat(\thetavec)(\Umat^T(\thetavec)\Wmat\Umat(\thetavec))^{-1}\Umat^T(\thetavec)\Wmat\Umat(\thetavec_{n})
\end{split}
\ee
$\forall m,n=1,\ldots,P$.
\end{claim}
\begin{proof}
By using \eqref{Dmat_block_define}, 
and applying blockwise matrix inversion (see e.g. \cite[p. 46]{MATRIX_COOKBOOK}) on the matrix $\Dmat^{(m,n)}(\thetavec,\Wmat,\Pimat)$,
we readily obtain the Schur complement property in \eqref{Dmat_c_block_define_boddy}.
\end{proof}

\color{black}

\subsection{Properties of LU-CBTB} \label{subsec:Properties of LU-CBTB}
In this subsection, we show the relations of the LU-CBTB with the LU-CCRB from \eqref{total_CR_bound_CS} and the CBTB from \eqref{CBTBmat}.

\subsubsection{Relation to LU-CCRB}
In the following, we show that the LU-CBTB is tighter than or equal to the LU-CCRB. The proof is based on the fact that the LU-CCRB can be obtained from the LU-CBTB for a specific choice of test-point vectors, which is not necessarily the choice that maximizes the LU-CBTB.

\begin{proposition} \label{order_relation_prop}
Assume that the following conditions are satisfied:
\renewcommand{\theenumi}{C.\arabic{enumi}} 
\begin{enumerate}
\setcounter{enumi}{0}
\item \label{cond1CRB} 
The LU-CBTB and the LU-CCRB exist. In particular, the derivative of the log-likelihood exists and \eqref{l_define} is well-defined.
\item \label{cond3CRB} 
Expectation and limits w.r.t. test-point vectors of the LU-CBTB can be interchanged.
\end{enumerate} 
\renewcommand{\theenumi}{\arabic{enumi}}
Then,
\be\label{order_relation}
B_{\text{LU-CBTB}}(\thetavec,\Wmat)\geq B_{\text{LU-CCRB}}(\thetavec,\Wmat).
\ee
\end{proposition}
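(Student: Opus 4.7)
The plan is to exhibit a specific family of test-point configurations $\Pimat(h)\in\Theta_\fvec^P$ whose associated LU-CBTB value converges to $B_{\text{LU-CCRB}}(\thetavec,\Wmat)$ as a shrinkage parameter $h\to 0$. Since the LU-CBTB in Theorem 1 is defined as a supremum over $\Pimat\in\Theta_\fvec^P$, this immediately yields the claimed inequality: the sup majorizes any limit of admissible evaluations.

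Concretely, I would set $P=M-K$ and, for each $m=1,\ldots,M-K$, choose a smooth curve $h\mapsto\thetavec_m(h)\in\Theta_\fvec$ with $\thetavec_m(0)=\thetavec$ and $\thetavec_m(h)=\thetavec+h\uvec_m(\thetavec)+O(h^2)$. Such curves exist by the implicit function theorem applied to $\fvec(\thetavec)=\zerovec$, since by \eqref{one}--\eqref{two} the columns of $\Umat(\thetavec)$ span the tangent space of $\Theta_\fvec$. Using $\tvec^{(0)}(\thetavec,\Wmat,\Pimat)=\zerovec$ and the fact that $B_{m+1,1}=B_{1,n+1}=1$, the block Schur-complement identity w.r.t.\ the $(0,0)$-block $\Umat^T(\thetavec)\Wmat\Umat(\thetavec)$ reduces $\tvec^T\Dmat^{\dagger}\tvec$ to $\tilde{\tvec}^T\Smat^{\dagger}\tilde{\tvec}$, where $\tilde{\tvec}$ drops the zeroth block and the $(m,n)$-block of $\Smat$ is
\begin{equation*}
\Smat^{(m,n)} = (B_{m+1,n+1}-1)\,\Umat^T(\thetavec_m)\Wmat\Umat(\thetavec_n) + \Umat^T(\thetavec_m)\Qmat\Umat(\thetavec_n),
\end{equation*}
with $\Qmat\define\Wmat-\Wmat\Umat(\thetavec)(\Umat^T(\thetavec)\Wmat\Umat(\thetavec))^{\dagger}\Umat^T(\thetavec)\Wmat$, which is symmetric and satisfies $\Qmat\Umat(\thetavec)=\zerovec$.

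Taylor-expanding at $h=0$, a standard Fisher-type calculation using \eqref{l_define}--\eqref{FIM} gives $(B_{m+1,n+1}(h)-1)/h^2\to\uvec_m^T\Jmat(\thetavec)\uvec_n$. Writing $\Umat(\thetavec_m(h))=\Umat(\thetavec)+h\Delta_m(\thetavec)+O(h^2)$ with the $l$-th column of $\Delta_m$ equal to $\Vmat_l(\thetavec)\uvec_m(\thetavec)$, the annihilation identities $\Qmat\Umat(\thetavec)=\Umat^T(\thetavec)\Qmat=\zerovec$ kill every term of order lower than $h^2$ in the second summand, so that $\Umat^T(\thetavec_m)\Qmat\Umat(\thetavec_n)/h^2\to\Delta_m^T\Qmat\Delta_n$. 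Concurrently, $\tilde{\tvec}(h)/h\to\vecc(\Umat^T\Wmat\Umat)$. Condition C.2 justifies passing these limits through the expectation defining the Barankin kernel and through the pseudoinverse.

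The remaining step is to match the limiting quadratic form with $B_{\text{LU-CCRB}}$. The $(l,l')$-entry of $\Smat^{(m,n)}/h^2$ tends to $\uvec_m^T\Jmat\uvec_n\cdot[\Umat^T\Wmat\Umat]_{l,l'}+\uvec_m^T\Vmat_l^T\Qmat\Vmat_{l'}\uvec_n$, which equals the corresponding entry of $(\Umat^T\Wmat\Umat)\otimes(\Umat^T\Jmat\Umat)+\Cmat_{\Umat,\Wmat}(\thetavec)$ after swapping the block-index pair $(m,n)$ with the intra-block pair $(l,l')$. Since $\Umat^T\Wmat\Umat$ is symmetric, $\vecc(\Umat^T\Wmat\Umat)$ is invariant under the commutation permutation implementing this swap, so the quadratic form is unaffected and coincides with $B_{\text{LU-CCRB}}(\thetavec,\Wmat)$ as defined in \eqref{total_CR_bound_CS}. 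The main technical obstacle I foresee is the pseudoinverse limit itself: because $\Smat(h)\to\zerovec$ while $\tilde{\tvec}^T\Smat^{\dagger}\tilde{\tvec}$ stays finite, one must show that the rank of $\Smat(h)$ equals that of its leading-order $h^2$ coefficient for all sufficiently small $h$, which is precisely the role of assumption C.2 in the statement.
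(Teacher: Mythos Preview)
Your proposal is correct and follows essentially the same route as the paper's proof: the paper likewise sets $P=M-K$, takes test points $\thetavec_m=\thetavec+\tau\uvec_m$ with $\tau\to0$, applies a block Schur complement to eliminate the zeroth block, computes the limits $(B_{m+1,n+1}-1)/\tau^2\to\uvec_m^T\Jmat\uvec_n$ and $\Umat^T(\thetavec_m)\Qmat\Umat(\thetavec_n)/\tau^2\to\Delta_m^T\Qmat\Delta_n$, and then uses the vec-permutation (commutation) matrix to swap the Kronecker factors and identify the result with $B_{\text{LU-CCRB}}$. Your use of curves in $\Theta_\fvec$ via the implicit function theorem is a slight refinement over the paper's direct perturbation $\thetavec+\tau\uvec_m$, and your explicit flagging of the pseudoinverse rank issue corresponds to the paper's simplifying assumption that the relevant matrices are invertible.
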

\begin{proof}
The proof is given in Appendix \ref{App_LU_CBTB_LU_CCRB_prop}.
\end{proof}

\subsubsection{Relation to CBTB}
First, we show that in general, the proposed LU-CBTB is lower than or equal to the CBTB from \cite{Hero_constraint}. Then, we show that for the special case of linear constraints the two bounds are equal. 
\begin{proposition} \label{CBTB_order_prop}
Assume that the following conditions are satisfied: 
\renewcommand{\theenumi}{C.\arabic{enumi}} 
\begin{enumerate}
\setcounter{enumi}{3}
\item \label{cond1CBTB} 
The CBTB and the LU-CBTB exist.
\item \label{cond2CBTB} 
${\mathcal{R}}(\Tmat^T(\thetavec,\Pimat))\subseteq{\mathcal{R}}(\Bmat^T(\thetavec,\Pimat)-\onevec \onevec^T)$, for any $\Pimat$ such that $\thetavec,\thetavec_1,\ldots,\thetavec_P\in\Theta_\fvec$.
\end{enumerate} 
\renewcommand{\theenumi}{\arabic{enumi}}
Then,
\be\label{order_CBTBs}
B_{\text{LU-CBTB}}(\thetavec,\Wmat)\leq B_{\text{CBTB}}(\thetavec,\Wmat).
\ee
\end{proposition}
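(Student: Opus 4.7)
My plan is to recast both bounds as quadratic forms in one common vector and then invoke a standard pseudoinverse matrix inequality. First I would fix an admissible $\Pimat \in \Theta_\fvec^P$ and introduce the stacked vector $\Delta\thetavec \define \mathrm{vec}(\Tmat(\thetavec,\Pimat)) \in \mathbb{R}^{(P+1)M}$ together with the block-diagonal matrix $\tilde{\Umat} \define \mathrm{blockdiag}(\Umat(\thetavec_0),\Umat(\thetavec_1),\ldots,\Umat(\thetavec_P))$, where $\thetavec_0 = \thetavec$. Routine $\mathrm{vec}$/Kronecker algebra, combined with the block structures in \eqref{Dmat_block_define} and \eqref{tvec_block_define}, then yields the three identities
\begin{align*}
\mathrm{Tr}\!\left(\Tmat\Bmat^{\dagger}\Tmat^{T}\Wmat\right) &= \Delta\thetavec^{T}(\Bmat^{\dagger}\otimes\Wmat)\Delta\thetavec, \\
\Dmat(\thetavec,\Wmat,\Pimat) &= \tilde{\Umat}^{T}(\Bmat\otimes\Wmat)\tilde{\Umat}, \\
\tvec(\thetavec,\Wmat,\Pimat) &= \tilde{\Umat}^{T}(\Imat_{P+1}\otimes\Wmat)\Delta\thetavec,
\end{align*}
converting the CBTB integrand in \eqref{CMS_W} and the LU-CBTB integrand in \eqref{total_CR_bound_CS_pre_W} into companion quadratic forms in the Gram-like matrix $\Bmat\otimes\Wmat$.

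Next I would peel off the weighting by setting $\vvec_W \define (\Imat_{P+1}\otimes\Wmat^{1/2})\Delta\thetavec$ and $\tilde{\Umat}_W \define (\Imat_{P+1}\otimes\Wmat^{1/2})\tilde{\Umat}$. Using $\Bmat\otimes\Wmat = (\Imat\otimes\Wmat^{1/2})(\Bmat\otimes\Imat_M)(\Imat\otimes\Wmat^{1/2})$ together with $(\Bmat\otimes\Imat_M)^{\dagger}=\Bmat^{\dagger}\otimes\Imat_M$, the two integrands reduce to
\begin{align*}
\mathrm{Tr}\!\left(\Tmat\Bmat^{\dagger}\Tmat^{T}\Wmat\right) &= \vvec_W^{T}(\Bmat\otimes\Imat_M)^{\dagger}\vvec_W, \\
\tvec^{T}\Dmat^{\dagger}\tvec &= \vvec_W^{T}\tilde{\Umat}_W\!\left[\tilde{\Umat}_W^{T}(\Bmat\otimes\Imat_M)\tilde{\Umat}_W\right]^{\dagger}\!\tilde{\Umat}_W^{T}\vvec_W.
\end{align*}
The pointwise comparison therefore collapses to the general matrix inequality $\Kmat^{\dagger}\succeq\Lmat(\Lmat^{T}\Kmat\Lmat)^{\dagger}\Lmat^{T}$, read as a quadratic form at $\vvec_W$, with $\Kmat=\Bmat\otimes\Imat_M$ and $\Lmat=\tilde{\Umat}_W$. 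To establish this inequality for an arbitrary $\vvec\in\mathcal{R}(\Kmat)$ I would write $\vvec=\Kmat^{1/2}\wvec$ and recognize the two quadratic forms as $\wvec^{T}\Pmat_{\Kmat^{1/2}}\wvec$ and $\wvec^{T}\Pmat_{\Kmat^{1/2}\Lmat}\wvec$, respectively; the range inclusion $\mathcal{R}(\Kmat^{1/2}\Lmat)\subseteq\mathcal{R}(\Kmat^{1/2})$ and the corresponding monotonicity of orthogonal projections then yield the bound.

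The remaining step, which I expect to be the main technical hurdle, is to verify that the range hypothesis $\vvec_W\in\mathcal{R}(\Bmat\otimes\Imat_M)$ is precisely what condition \ref{cond2CBTB} supplies. Since $\Bmat$ is symmetric, the identity $(\Bmat\otimes\Imat_M)\mathrm{vec}(\Ymat)=\mathrm{vec}(\Ymat\Bmat)$ characterizes $\mathcal{R}(\Bmat\otimes\Imat_M)$ as the set of $\mathrm{vec}(\Zmat)$ with $\mathcal{R}(\Zmat^{T})\subseteq\mathcal{R}(\Bmat)$. Writing $\vvec_W=\mathrm{vec}(\Wmat^{1/2}\Tmat)$ and noting $\mathcal{R}((\Wmat^{1/2}\Tmat)^{T})=\mathcal{R}(\Tmat^{T}\Wmat^{1/2})\subseteq\mathcal{R}(\Tmat^{T})$, condition \ref{cond2CBTB} delivers the chain $\mathcal{R}(\Tmat^{T})\subseteq\mathcal{R}(\Bmat^{T})=\mathcal{R}(\Bmat)$, so $\vvec_W\in\mathcal{R}(\Kmat)$ as needed. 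Combining these pieces gives, for every $\Pimat\in\Theta_\fvec^P$, that the CBTB integrand dominates the LU-CBTB integrand, and taking $\sup_{\Pimatsmall\in\Theta_\fvec^P}$ on both sides yields \eqref{order_CBTBs}.
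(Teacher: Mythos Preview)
Your argument is correct, but it takes a genuinely different route from the paper's. The paper's proof is estimation-theoretic rather than algebraic: for fixed test points it introduces the estimator $\hat{\thetavec}_{\text{CBTB}}=\thetavec+\Tmat\Bmat^{\dagger}[f_\xvec(\xvec;\thetavec_0)/f_\xvec(\xvec;\thetavec),\ldots,f_\xvec(\xvec;\thetavec_P)/f_\xvec(\xvec;\thetavec)]^{T}$, uses Condition~\ref{cond2CBTB} to show that this estimator is pointwise mean-unbiased at $\thetavec_0,\ldots,\thetavec_P$ (hence pointwise C-unbiased there), and then simply invokes Theorem~\ref{T3} to bound its WMSE below by $\tilde{B}_{\text{LU-CBTB}}$; since the WMSE of $\hat{\thetavec}_{\text{CBTB}}$ is exactly $\tilde{B}_{\text{CBTB}}$, the inequality follows. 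What you do instead is recast both integrands as quadratic forms in a common vector and appeal to the projection inequality $\Pmat_{\Kmat^{1/2}\Lmat}\preceq\Pmat_{\Kmat^{1/2}}$, with Condition~\ref{cond2CBTB} supplying the range condition $\vvec_W\in\mathcal{R}(\Kmat)$ needed to read $\vvec_W^{T}\Kmat^{\dagger}\vvec_W$ as $\wvec^{T}\Pmat_{\Kmat^{1/2}}\wvec$. Your approach is self-contained and exposes the underlying linear-algebraic mechanism (the LU-CBTB is the CBTB quadratic form restricted to the subspace $\mathcal{R}(\tilde{\Umat}_W)$), while the paper's approach is shorter and makes the statistical intuition transparent: mean-unbiasedness implies C-unbiasedness, so a bound derived under the weaker restriction cannot exceed one derived under the stronger restriction.
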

\begin{proof}
The proof is given in Appendix \ref{App_CBTB_order_prop}.
\end{proof}
Next, we consider the special case of linear equality constraints, which are in the form
\be\label{linear_constr_eq}
\fvec(\thetavec)=\Amat\thetavec=\zerovec,
\ee
where $\Amat\in\mathbb{R}^{K\times M}$. In the following proposition, we show that the LU-CBTB from \eqref{total_CR_bound_CS_pre_W} coincides with the conventional CBTB on the WMSE from \eqref{CMS_W} for linear constraints. It should be noted that under linear constraints, the CBTB and CCRB may be different. Moreover, the CBTB and the LU-CBTB may exist in cases where the CCRB and the LU-CCRB do not, as shown in the simulations in Section \ref{sec:Simulations}.
\begin{proposition}\label{linear_constraints_prop}
Assume that the constraints are in the form of \eqref{linear_constr_eq} and that the CBTB and the LU-CBTB exist. Then,
\be\label{linear_constraint_equality}
B_{\text{LU-CBTB}}(\thetavec,\Wmat)=B_{\text{CBTB}}(\thetavec,\Wmat).
\ee
\end{proposition}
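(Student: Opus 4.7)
My plan is to exploit the fact that linear constraints collapse all the constraint-geometry data into a single constant matrix, so the LU-CBTB expression factorizes as a Kronecker product and reduces back to the CBTB trace. Concretely, under $\fvec(\thetavec)=\Amat\thetavec$ we have $\Fmat(\thetavec)=\Amat$, and consequently the matrix $\Umat(\thetavec)$ from \eqref{one}--\eqref{two} can be chosen to be a fixed matrix $\Umat$, independent of $\thetavec$. Moreover, for any $\thetavec,\thetavec_m\in\Theta_\fvec$, one has $\Amat(\thetavec_m-\thetavec)=\zerovec$, so every column of $\Tmat(\thetavec,\Pimat)$ defined in \eqref{T_BTB} lies in $\mathcal{N}(\Amat)=\mathcal{R}(\Umat)$; hence $\Pmat_\Umat\Tmat(\thetavec,\Pimat)=\Tmat(\thetavec,\Pimat)$, where $\Pmat_\Umat=\Umat\Umat^T$ by \eqref{two}.

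Next I would rewrite the LU-CBTB quantities in \eqref{Dmat_block_define} and \eqref{tvec_block_define} in this setting. Since $\Umat(\thetavec_m)=\Umat$ for every $m$, the block structure of $\Dmat(\thetavec,\Wmat,\Pimat)$ shows that
\begin{equation*}
\Dmat(\thetavec,\Wmat,\Pimat)=\Bmat(\thetavec,\Pimat)\otimes(\Umat^T\Wmat\Umat),
\end{equation*}
while stacking the block columns of $\Umat^T\Wmat\Tmat(\thetavec,\Pimat)$ yields $\tvec(\thetavec,\Wmat,\Pimat)=\vecc(\Umat^T\Wmat\Tmat(\thetavec,\Pimat))$. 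Using the Kronecker-product pseudo-inverse identity $(\Amat\otimes\Bmat)^{\dagger}=\Amat^{\dagger}\otimes\Bmat^{\dagger}$ together with the standard vec identities $(\Bmat^T\otimes\Amat)\vecc(\Xmat)=\vecc(\Amat\Xmat\Bmat)$ and $\vecc(\Amat)^T\vecc(\Cmat)=\Tr(\Amat^T\Cmat)$ (and the symmetry of $\Bmat^\dagger$), I would transform the LU-CBTB integrand into
\begin{equation*}
\tvec^T\Dmat^{\dagger}\tvec=\Tr\!\Bigl(\Bmat^{\dagger}(\thetavec,\Pimat)\,\Tmat^T(\thetavec,\Pimat)\Wmat\Umat(\Umat^T\Wmat\Umat)^{\dagger}\Umat^T\Wmat\Tmat(\thetavec,\Pimat)\Bigr).
\end{equation*}

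Now I would apply the identity $\Tmat=\Pmat_\Umat\Tmat=\Umat(\Umat^T\Tmat)$ established above. Substituting $\Smat\define\Umat^T\Tmat(\thetavec,\Pimat)$ and using the Moore--Penrose identity $\Amat\Amat^{\dagger}\Amat=\Amat$ with $\Amat=\Umat^T\Wmat\Umat$ collapses the middle factor:
\begin{equation*}
\Tmat^T\Wmat\Umat(\Umat^T\Wmat\Umat)^{\dagger}\Umat^T\Wmat\Tmat=\Smat^T(\Umat^T\Wmat\Umat)\Smat=(\Umat\Smat)^T\Wmat(\Umat\Smat)=\Tmat^T\Wmat\Tmat.
\end{equation*}
Hence $\tvec^T\Dmat^{\dagger}\tvec=\Tr(\Bmat^{\dagger}(\thetavec,\Pimat)\Tmat^T(\thetavec,\Pimat)\Wmat\Tmat(\thetavec,\Pimat))=\Tr(\Tmat(\thetavec,\Pimat)\Bmat^{\dagger}(\thetavec,\Pimat)\Tmat^T(\thetavec,\Pimat)\Wmat)$ by the cyclic property of the trace. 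Since this equality holds for every admissible $\Pimat\in\Theta_\fvec^P$, taking the supremum over $\Pimat$ on both sides yields \eqref{linear_constraint_equality}.

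The only delicate point is the pseudo-inverse manipulation: the Kronecker identity $(\Amat\otimes\Bmat)^{\dagger}=\Amat^{\dagger}\otimes\Bmat^{\dagger}$ is classical, but I want to ensure the range/null-space conditions that justify collapsing $\Wmat\Umat(\Umat^T\Wmat\Umat)^{\dagger}\Umat^T\Wmat$ hold in the pseudo-inverse sense when $\Wmat$ is merely positive semidefinite. This is handled cleanly by factoring $\Wmat=\Wmat^{1/2}\Wmat^{1/2}$ and noting that $\Amat\Amat^{\dagger}\Amat=\Amat$ requires no invertibility, so the step above goes through for any positive semidefinite $\Wmat$. Everything else is bookkeeping with Kronecker products and the projection identity $\Pmat_\Umat\Tmat=\Tmat$ that follows from linearity of the constraints.
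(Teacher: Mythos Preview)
Your proof is correct and follows essentially the same route as the paper's: both arguments use that $\Umat$ is constant under linear constraints, establish the projection identity $\Umat\Umat^T\Tmat=\Tmat$, factor $\Dmat$ as $\Bmat\otimes(\Umat^T\Wmat\Umat)$, convert $\tvec^T\Dmat^\dagger\tvec$ to a trace via the vec/Kronecker identities, and then collapse the middle factor using $\Amat\Amat^\dagger\Amat=\Amat$. The only cosmetic difference is that the paper writes $\tvec=\vecc((\Umat^T\Wmat\Umat)\Umat^T\Tmat)$ after already applying the projection identity, whereas you write $\tvec=\vecc(\Umat^T\Wmat\Tmat)$ and apply the projection later; these are the same object.
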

\begin{proof}
The proof is given in Appendix \ref{App_linear_constraints_prop}.
\end{proof}
Since the result in \eqref{linear_constraint_equality} holds for any $\Wmat$, by taking $\Wmat=\avec \avec^T$, it can be seen that according to \eqref{WMSE_for_aa}
in this case, we obtain
\begin{align}
\label{WMSE_for_aa_bound}
\avec^T {\rm{E}}\left[(\hat{\thetavec}-\thetavec) (\hat{\thetavec}-\thetavec)^T;\thetavec \right] \avec \geq 
B_{\text{LU-CBTB}}(\thetavec,\avec \avec^T)
\\=B_{\text{CBTB}}(\thetavec,\avec \avec^T)=
\avec^T \Bmat_{\text{CBTB}}(\thetavec)\avec,
\end{align}
where the last equality stems from \eqref{CBTBmat} and
\eqref{CMS_W}.
Thus, in the case of linear constraints, the proposed LU-CBTB can be written in a matrix form, as the conventional CBTB.
However, in the general case, the  LU-CBTB depends on the specific choice of $\Wmat$, and can be interpreted as a lower bound on arbitrary linear combinations of estimation error. This way, we obtain tighter lower bounds that fit the parameter estimation problem at hand.

\subsubsection{Unconstrained parameter estimation} The special case $K=0$ implies an unconstrained estimation problem in which $\Umat(\thetavec)=\Imat_M$ for any $\thetavec$. In this case, if we take the weighting matrix $\Wmat=\Imat_M$,  then the matrix $\Dmat^{(m,n)}(\thetavec,\Wmat,\Pimat)$ from \eqref{Dmat_block_define} is reduced to
\beqna\label{Dmat_block_define)unconstrained}
\Dmat(\thetavec,\Wmat,\Pimat)= \left[\begin{array}{cc} 1 &\onevec^T\\
\onevec & \Bmat(\thetavec,\Pimat) 
\end{array}\right]\otimes \Imat_M.
\eeqna
Thus, under the assumption that 
 $\Dmat(\thetavec,\Wmat,\Pimat)$ is a non-singular matrix, 
by using Kronecker product rules, we obtain
\beqna
\label{D_inv}
\Dmat^{-1}(\thetavec,\Wmat,\Pimat)=\left[\begin{array}{cc} 1 &\onevec^T\\
\onevec & \Bmat(\thetavec,\Pimat) 
\end{array}\right]^{-1} \otimes \Imat_M.
\eeqna
Then, computing the lower-right $P M\times P M$ block of $\Dmat^{-1}(\thetavec,\Wmat,\Pimat)$ from \eqref{D_inv} by using Schur complement properties, results in
\beqna
\label{inv_block}
\left[ \Dmat^{-1}(\thetavec,\Wmat,\Pimat)\right]_{LR}
=\left( \Bmat(\thetavec,\Pimat)-\onevec\onevec^T\right)^{-1} \otimes \Imat_M.
\eeqna
In addition, the vector $\tvec(\thetavec,\Wmat,\Pimat)$, defined in \eqref{tvec_block_define}, is reduced in this case (i.e., where $\Umat(\thetavec)=\Wmat=\Imat_M$) to
\be\label{tvec_block_define_app}
\tvec(\thetavec,\Wmat,\Pimat)={\rm{vec}}(\Tmat(\thetavec,\Pimat)),
\ee
where $\Tmat(\thetavec,\Pimat)$ is defined in \eqref{T_BTB}.
By substituting \eqref{inv_block} and \eqref{tvec_block_define_app} in the LU-CBTB from \eqref{total_CR_bound_CS_pre_W}, we obtain that
\beqna \label{total_CR_bound_CS_pre_W_unconstrained}
B_{\text{LU-CBTB}}(\thetavec,\Wmat)
=\underset{\Pimatsmall\in\Theta_\fvec^P}{\sup}\left\{{\rm{vec}}^T(\Tmat(\thetavec,\Pimat))\right.\hspace{1.25cm}\nonumber\\
\left. \times\left(
\left( \Bmat(\thetavec,\Pimat)-\onevec\onevec^T\right)^{-1} \otimes \Imat_M\right){\rm{vec}}(\Tmat(\thetavec,\Pimat))\right\}
\nonumber\\
=\underset{\Pimatsmall\in\Theta_\fvec^P}{\sup}\left\{
{\rm{Tr}}\left( \Tmat(\thetavec,\Pimat)
\left( \Bmat(\thetavec,\Pimat)-\onevec\onevec^T\right)^{-1}  \Tmat^T(\thetavec,\Pimat)\right) \right\},
\eeqna
where the last equality is obtained by using  \eqref{kronecker_prop} from Appendix \ref{App_linear_constraints_prop}.
It can be seen that in \eqref{total_CR_bound_CS_pre_W_unconstrained}, we obtain the conventional BTB on the trace of the MSE matrix, without constraints.
\color{black}
\section{Example: DOA estimation for constant modulus discrete signal} \label{sec:Simulations}
We begin this section by presenting the model and parametric constraints in Subsection \ref{subsec:Model and constrained setting}. Then, we present the CML estimator and the unbiasedness requirements in Subsection \ref{subsec:CML estimator}, where the corresponding performance bounds are presented in Subsection \ref{performance_bounds_sim_subsec}. The numerical results are presented in Subsection \ref{subsec:Numerical results}. For brevity of this section, we remove the arguments of vector/matrix functions, except for specific derivations/functions where the arguments are necessary.

\subsection{Model and constrained setting}\label{subsec:Model and constrained setting}
In this section, we consider a snapshot of a CM signal propagating through a homogeneous medium towards a uniform circular array (UCA) of $Q$ sensors with radius $r$. 
 The source and the array are assumed to be coplanar, i.e. the source is located at the same plane of the array.
CM signals are commonly encountered in radar and communications with phase modulated signals \cite{VAN_DER_VEEN_CM,LESHEM_CM,STOICA_CM}, where in some cases, the signals are discrete and belong to a finite or countable set \cite{GAMBOA,DELMAS_DOA,DELMAS_DOA_SIGNALS,KRUMMENAUER,MEDINA_INTEGER,LESHEM_DISCRETE,MANIOUDAKIS,YANG_DISCRETE}. 
The complex baseband array output is modeled as follows (see, e.g., \cite{LESHEM_CM,STOICA_CM})
\be \label{model}
\xvec=\alpha e^{j\phi}\avec(\nu)+\nvec,
\ee
where $\alpha>0$ is the amplitude of the received signal and $\nu
\in\mathbb{C}$, such that $|\nu|=1$.
The steering vector, $\avec(\nu)$, is obtained from a source impinging from direction $\angle{\nu}\in[-\pi,\pi)$ and satisfies $\|\avec(\nu)\|^{2}=Q$. In particular, the $q$th element of $\avec(\nu)$ is 
\be
\label{a_q_def}
a_{q}(\nu)=e^{j \frac{2\pi r}{\lambda}{\text{Re}}\left\{\nu e^{-j \frac{2\pi (q-1)}{Q}}\right\}},~~~\forall q=1,\ldots,Q,
\ee
where 
$\lambda$ is the signal wavelength. The CM signal $e^{j\phi}$ is known to belong to a finite set, $\mathcal{S}$. The noise, $\nvec$, is a spatially and temporally white complex Gaussian random vector with known covariance matrix $\sigma^2 \Imat_Q$. The SNR is defined as ${\text{SNR}}\define\frac{\alpha^2}{\sigma^2}$.\\
\indent 
 The unknown parameter vector in this case is 
 \be
 \label{theta_def_sim}
 \thetavec=[\theta_1,\theta_2,\theta_3,\theta_4]^T=[{\text{Re}}\{\nu\},{\text{Im}}\{\nu\},\phi,\alpha]^T
 \ee under the constraint
\be\label{f_example}
f(\thetavec)=\theta_1^2+\theta_2^2-1=0.
\ee
Norm constraints in the form of 
$\sum_{k=1}^K \theta_k^2 -\rho^2=0$, with  $1\leq K \leq M$ and $\rho>0$,
are commonly used in various signal processing problems  (see, e.g. \cite{Moore_scoring,Golub,CHEN_REGULAR,normC}).
For norm constraints, reparameterization of the original problem results in a periodic distribution, for which there is no
uniformly mean-unbiased estimator of the parameter vector \cite{PHASE_KAY,CYCLIC,TODROS_WINNIK}.
\\
\indent
Taking the gradient of \eqref{f_example}, we obtain
\be\label{F_example}
\Fmat(\thetavec)=2\cdot [\theta_1,\theta_2,0,0]. 
\ee
By substituting \eqref{F_example} in \eqref{one} and \eqref{two}, it can be shown that the matrix
\be\label{U_example}
\Umat(\thetavec)=\begin{bmatrix}
\theta_2 & 0 & 0\\
-\theta_1 & 0 & 0\\
0 & 1 & 0\\
0 & 0 & 1
\end{bmatrix}
\ee
is an orthonormal complement matrix for the considered problem.
\subsection{Unbiasedness and the CML estimator}
\label{subsec:CML estimator}
In this example, we are interested in the estimation of $\theta_1$ and $\theta_2$ that determine the DOA of the signal. The CM signal phase and amplitude, $\phi$ and $\alpha$, respectively, are considered as nuisance parameters. Thus, according to the definition of the unknown parameter vector $\thetavec$ in \eqref{theta_def_sim}, we choose the weighting matrix
\be\label{W_choice}
\Wmat={\rm{diag}}([1,1,0,0]^T).
\ee
It is shown in \cite{SOMEKH_LESHEM} that under a norm constraint, as in \eqref{f_example}, there is no uniformly mean-unbiased estimator of the parameter vector. However, it will be shown later in this section that a C-unbiased estimator of the parameter $\nu$ can be found, and a corresponding informative performance bound can be derived.

In this case, by inserting \eqref{U_example} and (\ref{W_choice}) into the C-unbiasedness condition from \eqref{unbiased_cond_nec}, we obtain the requirement
\be\label{C_bias_examp}
\begin{split}
\big[{\rm{E}}[\theta_2\hat{\theta}_1-\theta_1\hat{\theta}_2;\thetavec],0,0\big]^T=\zerovec,~\forall\thetavec\in\Theta_\fvec.
\end{split}
\ee
Thus,  the Lehmann requirement for uniform C-unbiasedness in this case is
\be
\label{unbiased_cond_nec_sim}
{\rm{E}}[\hat{\theta}_1;\thetavec]\theta_2-{\rm{E}}[\hat{\theta}_2;\thetavec]\theta_1=0
,~\forall\thetavec\in\Theta_\fvec,
\ee
where we use the fact that $\theta_1$ and $\theta_2$ are deterministic. 
It can be seen that the condition stated in \eqref{unbiased_cond_nec_sim} is less restrictive compared to the mean-unbiasedness requirement, which is ${\rm{E}}[\hat{\theta}_i;\thetavec]=\theta_i$, $\forall i=1,2,3,4$.
Moreover, it is worth mentioning
that, unlike mean-unbiasedness, there always exists at least one C-unbiased estimator. A 
C-unbiased estimator can be obtained by setting $\hat{\theta}_1=\cos\beta$ and $\hat{\theta}_2=\sin\beta$, where $\beta$ is a random variable uniformly distributed in the range $[-\pi,\pi)$ and independent of $\xvec$. In this scenario, both ${\rm{E}}[\hat{\theta}_1;\thetavec]$ and ${\rm{E}}[\hat{\theta}_2;\thetavec]$ are equal to zero. Thus, it is a C-unbiased estimator that satisfies \eqref{unbiased_cond_nec_sim}. Although the performance of this estimator in terms of WMSE is poor, it demonstrates the existence of C-unbiased estimators for this problem, whereas the set of mean-unbiased estimators is empty \cite{SOMEKH_LESHEM}.

Taking into account the discrete nature of $e^{j\phi}$, or equivalently $\phi$, the CML estimators of $\nu$ and $\phi$ are given by \cite{STOICA_CM}: $\hat\nu=e^{j \angle{\hat\nu}}$ and $\hat\phi$, respectively, where
\be\label{CML_phases}
(\angle{\hat\nu},\hat\phi)=\arg\underset{\angle{\nu}\in[-\pi,\pi),\phi\in\angle{\mathcal{S}}}{\max}{\text{Re}}\{\avec^{H}(\nu)\xvec e^{-j\phi}\},
\ee
in which $\angle{\mathcal{S}}$ is the set of phases of the complex numbers in $\mathcal{S}$. The CML estimator of $\alpha$ is given by 
\[\hat{\alpha}=\frac{1}{Q}{\text{Re}}\{\avec^{H}(\hat{\nu})\xvec e^{-j\hat\phi}\}.\]\\
\indent
By using the CML  from \eqref{CML_phases}, 
it can be verified that, in this case,  the 
C-unbiasedness condition from \eqref{unbiased_cond_nec_sim} is reduced to the requirement
\beqna
\label{unbiased_cond_nec_sim2}
{\rm{E}}[\cos \angle{\hat\nu};\thetavec]\sin(\angle{\nu})-{\rm{E}}[\sin\angle{\hat\nu};\thetavec]\cos(\angle{\nu})
\nonumber\\=
{\rm{E}}[
\sin(\angle{\nu}-\angle{\hat\nu});\thetavec]=0.
\eeqna
The uniformly C-unbiasedness condition in \eqref{unbiased_cond_nec_sim2} coincides with
the uniformly periodic unbiasedness definition of DOA estimation in previous
works (e.g. \cite{DIRECTIONAL} pp. 118–119, \cite{CYCLIC}).
\color{black}
In order to examine the mean-unbiasedness and C-unbiasedness of the CML estimator, we numerically evaluate its mean-bias and C-bias from \eqref{C_bias_examp} in Subsection \ref{subsec:Numerical results}, where the mean-bias is evaluated only for the estimation of $\nu$.
\subsection{Performance bounds}
\label{performance_bounds_sim_subsec}
Since $\phi$ is discrete, the derivative of the likelihood function w.r.t. it is not defined. Accordingly, the CCRB from \eqref{CCRB_W} and the LU-CCRB from \eqref{total_CR_bound_CS}, which require existence of the derivatives of the likelihood function w.r.t. all the unknown parameters, do not exist in this case. In contrast, the CBTB from \eqref{CMS_W} and the LU-CBTB from \eqref{total_CR_bound_CS_pre_W} can be derived. For the derivation, we need to choose test-point vectors that satisfy the constraints in \eqref{f_example}. Thus, the test-point vectors can be written as
\be\label{theta_tp_structure}
\thetavec_p=[\cos\omega_p,\sin\omega_p,\phi_p,\alpha_p]^T,
\ee
$\forall p=1,\ldots,P$, where $\omega_p\in[-\pi,\pi)$, $\phi_{p}\in\angle{\mathcal{S}}$, and $\alpha_p>0$. For both the CBTB and the LU-CBTB, the discrete nature of $\phi$ is manifested in the test-point vectors from \eqref{theta_tp_structure}.\\
\indent
One of the main advantages of Cram$\acute{\text{e}}$r-Rao-type bounds is that in many cases they can be derived in closed-form without numerical matrix inversion. For BTBs it is usually more difficult to avoid this numerical matrix inversion. However, we show that for a specific choice of the test-point vectors, we can obtain closed-form expressions for the CBTB and LU-CBTB in this case. 
We set $P=3$ and choose the following test-point vectors:
\be\label{tp_nu}
\thetavec_1=[\cos(\angle{\nu}+h_\nu),\sin(\angle{\nu}+h_\nu),\phi,\alpha]^T,
\ee
\be\label{tp_s}
\thetavec_2=[\cos(\angle{\nu}),\sin(\angle{\nu}),\phi+h_\phi,\alpha]^T,
\ee
and
\be\label{tp_alpha}
\thetavec_3=[\cos(\angle{\nu}),\sin(\angle{\nu}),\phi,\alpha+h_\alpha]^T,
\ee
where $h_\nu\in[-\pi,\pi)$, $(\phi+h_\phi)\in\mathcal{S}$, and $(\alpha+h_\alpha)>0$, s.t. the test-point vectors satisfy the constraint from \eqref{f_example} and the discrete nature of $\phi$.

\begin{claim}
\label{claim_sim}
    For the considered model and the test-point vectors given in \eqref{tp_nu}-\eqref{tp_alpha}, the CBTB and LU-CBTB from
  \eqref{CMS_W} and \eqref{total_CR_bound_CS_pre_W}, respectively
    are given by
    \be \label{CBTB_DOA}
\begin{split}
B_{\text{CBTB}}=\underset{h_\nu,h_\phi,h_\alpha}{\sup}\frac{2-2\cos h_\nu}{\bar{B}_{2,2}+1-G_\Bmat(h_\nu,h_\phi,h_\alpha)}
\end{split}
\ee
and
    \be \label{LU_CBTB_DOA}
\begin{split}
B_{\text{LU-CBTB}}=\underset{h_\nu ,h_\phi,h_\alpha}{\sup}\frac{\sin^2{h_\nu}}{\bar{B}_{2,2}+1-G_\Bmat(h_\nu,h_\phi,h_\alpha)\cos^2{h_\nu}},
\end{split}
\ee
where the supremum in both bounds is taken over $h_\nu\in[-\pi,\pi)$, $(\phi+h_\phi)\in\mathcal{S}$, $(\alpha+h_\alpha)>0$,  
\beqna\label{G_B}
G_\Bmat(h_\nu,h_\phi,h_\alpha)\define 1 + \frac{1}{\bar{B}_{3,3}\bar{B}_{4,4} - \bar{B}_{3,4}^2}\hspace{2.25cm}\nonumber\\
\times\big( \bar{B}_{4,4} \bar{B}_{2,3}^2 -2\bar{B}_{2,3}\bar{B}_{2,4}\bar{B}_{3,4} + \bar{B}_{3,3}\bar{B}_{2,4}^2\big).
\eeqna
in which  $\bar{B}_{m,n}$, $m,n=2,3,4$ are given in Appendix \ref{new_app}, in \eqref{Bmat_22}- \eqref{Bmat_44}. 
\end{claim}
\begin{proof}
The proof is given in Appendix \ref{new_app}.
\end{proof}

 By comparing \eqref{CBTB_DOA} and \eqref{LU_CBTB_DOA} and using the facts that $G_\Bmat(h_\nu,h_\phi,h_\alpha)$ is non-negative and $\sin^2{h_\nu}=(1-\cos{h_\nu})(1+\cos{h_\nu})\leq 2-2\cos{h_\nu}$, it can be observed that
 \[B_{\text{LU-CBTB}}\leq B_{\text{CBTB}}, \]
 which is in accordance with Proposition \ref{CBTB_order_prop}. It should be emphasized that this inequality results from the different unbiasedness restrictions imposed by the bounds and not due to lack of tightness of the LU-CBTB compared to the CBTB.

\subsection{Numerical results}\label{subsec:Numerical results}
In the following simulations, we assume that $e^{j\phi}$ is a quadrature phase-shift keying (QPSK) modulated signal \cite{DELMAS_DOA,DELMAS_DOA_SIGNALS}, i.e. $\phi\in\{\frac{\pi}{4},\frac{3\pi}{4},-\frac{3\pi}{4},-\frac{\pi}{4}\}$. We evaluate the LU-CBTB and CBTB from \eqref{LU_CBTB_DOA} and \eqref{CBTB_DOA}, respectively, and compare them to the WMSE of the CML estimator of $\thetavec$ with $\Wmat$ from \eqref{W_choice}. The mean-unbiasedness and C-unbiasedness of the CML estimator for DOA estimation are investigated as well, in accordance with the choice of $\Wmat$ from \eqref{W_choice}. The performance of the CML estimator is evaluated using 10,000 Monte-Carlo trials. To reduce the complexity of the maximization in \eqref{LU_CBTB_DOA}-\eqref{CBTB_DOA}, we set $h_\alpha=10^{-5}$ and only two-dimensional maximization is performed w.r.t. $h_\phi\in\{-\pi,-\frac{\pi}{2},\frac{\pi}{2}\}$ and $h_\nu\in[-\pi,\pi)$. The grid for $h_\nu$ contains $64$ equally spaced points in the interval $[-\pi,\pi)$. Consequently, the implemented bounds require $64\times 3$ two-dimensional grid search w.r.t. $(h_\nu,h_\phi)$. We set $Q=4$, $\sigma^2=0.5$, and $\phi=\frac{\pi}{4}$.
\\
\subsubsection{Case A}
In this case, we set $\frac{2\pi r}{\lambda}=0.5$.
In Fig. \ref{CML_bias_alpha}, we evaluate the norms of the bias and the C-bias of the CML estimator versus SNR for $\angle\nu=0.9\pi$. The C-bias norm is approximately zero for all the considered SNR values implying C-unbiasedness of the CML estimator. In contrast, the considered bias norm for estimation of $\nu$ is nonzero for ${\text{SNR}}<20~{\text{dB}}$. In Fig. \ref{CML_MSE_alpha}, we evaluate the WMSE of the CML estimator, the CBTB, and the LU-CBTB versus SNR. It can be seen that the LU-CBTB is a lower bound on the CML WMSE for all the considered SNR values, while the CBTB is not a lower bound for ${\text{SNR}}<0~{\text{dB}}$. This is due to the mean-bias of the CML estimator. In addition, in terms of capturing the CML error behavior, which is prominent for reliable performance bounds, the LU-CBTB significantly outperforms the CBTB.\\
\indent
In Fig. \ref{CML_bias_nu}, we evaluate the norms of the bias and the C-bias of the CML estimator versus $\angle\nu\in[-\pi,\pi)$ for $\alpha=0.16, \sigma^2=0.5$, or equivalently, ${\text{SNR}}\approx -13~{\text{dB}}$. It can be seen that the CML C-bias norm is approximately zero, while its bias norm is significantly higher than zero for all the considered values of $\angle\nu$. These results imply that the CML estimator is C-unbiased, as required by the LU-CBTB, but not mean-unbiased, as required by the CBTB.\\
\indent
In Fig. \ref{CML_MSE_nu}, we evaluate the WMSE of the CML estimator, the CBTB, and the LU-CBTB versus $\angle\nu\in[-\pi,\pi)$. It can be seen that 
in this case, the CML WMSE and the bounds remain nearly constant for any $\angle\nu$. In addition,
the LU-CBTB is a lower bound on the CML WMSE, while the CBTB is not. Moreover, the CBTB is significantly higher than the CML WMSE, demonstrating its inappropriateness in this case. Overall, the results in Figs. \ref{CML_bias_alpha}-\ref{CML_MSE_nu} show that unlike the LU-CBTB the CBTB is not reliable for performance analysis of the CML estimator in the considered example.\\
\begin{figure}[h!]
\centering\includegraphics[width=7cm]{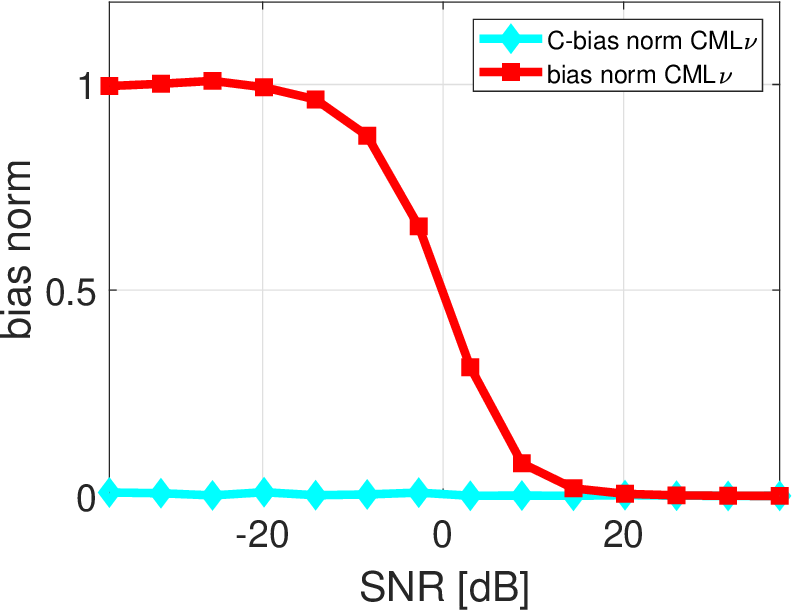}
\caption{The norms of the bias and the C-bias of the CML estimator versus SNR for $\frac{2\pi r}{\lambda}=0.5$.
}\label{CML_bias_alpha}
\end{figure}

\begin{figure}[h!]
\centering\includegraphics[width=7cm]{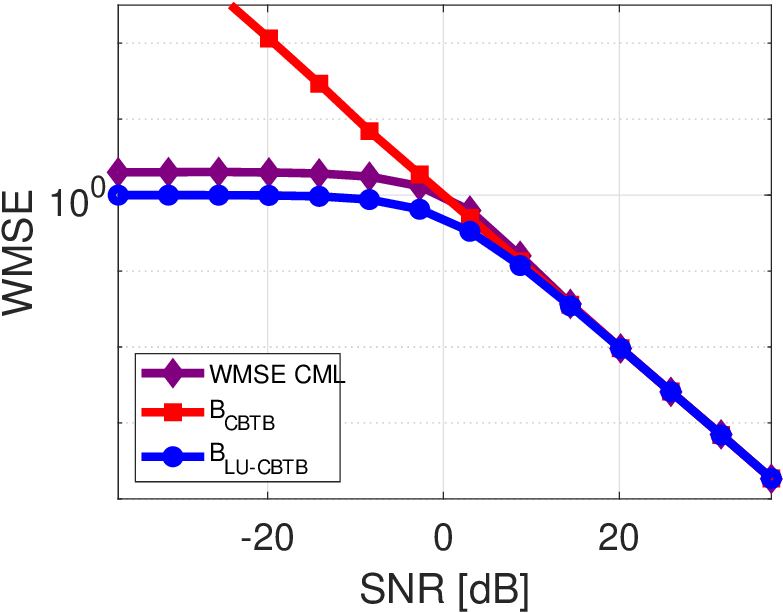}
\caption{The WMSE of the CML estimator, CBTB, and LU-CBTB versus SNR for $\frac{2\pi r}{\lambda}=0.5$.
}\label{CML_MSE_alpha}
\end{figure}

\begin{figure}[h!]
\centering\includegraphics[width=7cm]{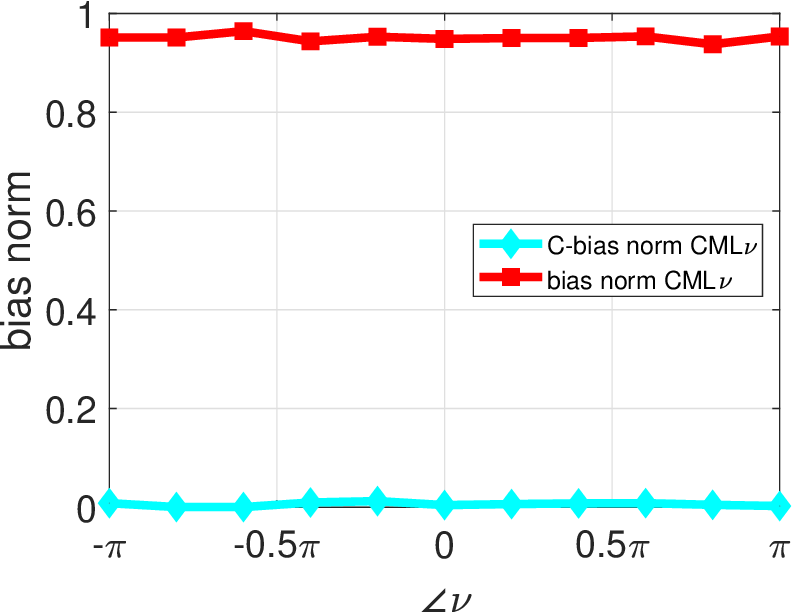}
\caption{The norms of the bias and the C-bias of the CML estimator versus $\angle\nu$.
}\label{CML_bias_nu}
\end{figure}

\begin{figure}[h!]
\centering\includegraphics[width=7cm]{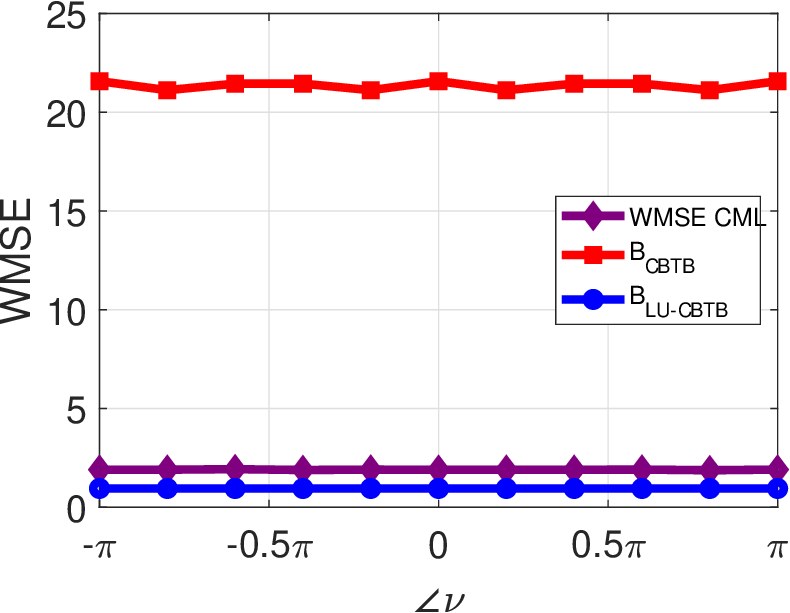}
\caption{The WMSE of the CML estimator, CBTB, and LU-CBTB versus $\angle\nu$.
}\label{CML_MSE_nu}
\end{figure}

\subsubsection{Case B}
In this case, we set $\frac{2\pi r}{\lambda}=3.2$. For UCA with greater radius, the ambiguity in the DOA estimation is expected to be higher, and the threshold SNR of the CML estimator increases \cite{RIFE,TTB1,Richmond_2005}. \color{black} ``Large-error'' lower bounds should take such scenario into account \cite{TTB1,TABRIKIAN_OCEAN}. 
To examine the CML threshold SNR scenario, we reevaluate the performance of the CML estimator versus SNR and compare it to the LU-CBTB from \eqref{LU_CBTB_DOA} using a higher value of $\frac{2\pi r}{\lambda}$, in particular $\frac{2\pi r}{\lambda}=3.2$ rather than $\frac{2\pi r}{\lambda}=0.5$ that was used in Figs. \ref{CML_bias_alpha}-\ref{CML_MSE_alpha}. In addition, we set $\angle\nu=0.9\pi$.\\
\indent
For a higher value of $\frac{2\pi r}{\lambda}$, it is expected that the DOA test-point grid density will have an effect on the tightness of ``large-error'' bounds (see e.g. \cite{TTB1,BZB_EYAL}).
In this region, there are higher sidelobes in the likelihood function and the test-point vectors are expected to be around the sidelobes or ambiguous peaks \cite{TABRIKIAN_OCEAN}. 
However, a denser test-point grid increases the computational complexity of the LU-CBTB. In the following, we are interested in inspecting this tradeoff between computational complexity and tightness for the LU-CBTB. Thus, except for the previously described LU-CBTB that requires $64\times 3$ two-dimensional grid search w.r.t. $(h_\nu,h_\phi)$, we also implement a low-complexity LU-CBTB with $h_\nu=10^{-5}$, i.e. no grid search for $h_\nu$. This low-complexity LU-CBTB requires only a one-dimensional grid search w.r.t. $h_\phi\in\{-\pi,-\frac{\pi}{2},\frac{\pi}{2}\}$.\\
\indent 
In Fig. \ref{CML_bias_alpha_thresh}, we evaluate the norms of the bias and the C-bias of the CML estimator versus SNR. The C-bias norm is approximately zero for ${\text{SNR}}>20~{\text{dB}}$ and ${\text{SNR}}<-20~{\text{dB}}$. The bias norm is higher than zero for ${\text{SNR}}<20~{\text{dB}}$. For $-20~{\text{dB}}\leq{\text{SNR}}\leq 20~{\text{dB}}$ there is a nonzero C-bias norm but significantly lower than the corresponding bias norm.\\
\indent 
Finally, in Fig. \ref{CML_MSE_alpha_thresh}, we evaluate the WMSE of the CML estimator, the LU-CBTB, and the low-complexity LU-CBTB versus SNR. It can be seen that the LU-CBTBs are lower bounds on the CML WMSE for all the considered SNR values. Moreover, for ${\text{SNR}}>25~{\text{dB}}$ the bounds coincide with the WMSE of the CML estimator. The LU-CBTB with the two-dimensional grid search w.r.t. $(h_\nu,h_\phi)$ is tighter than the low-complexity LU-CBTB for ${\text{SNR}}<20~{\text{dB}}$.

\begin{figure}[h!]
\centering\includegraphics[width=7cm]{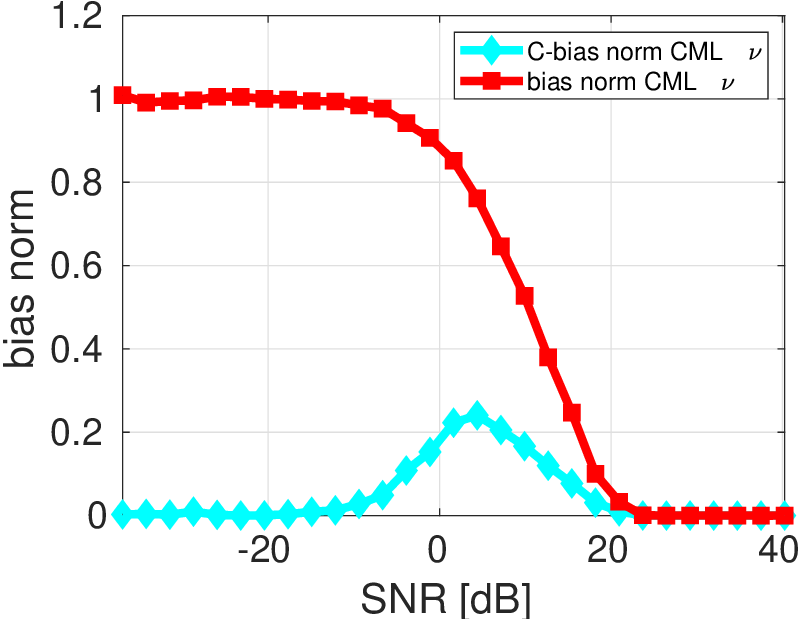}
\caption{The norms of the bias and the C-bias of the CML estimator versus SNR for $\frac{2\pi r}{\lambda}=3.2$.
}\label{CML_bias_alpha_thresh}
\end{figure}

\begin{figure}[h!]
\centering\includegraphics[width=7cm]{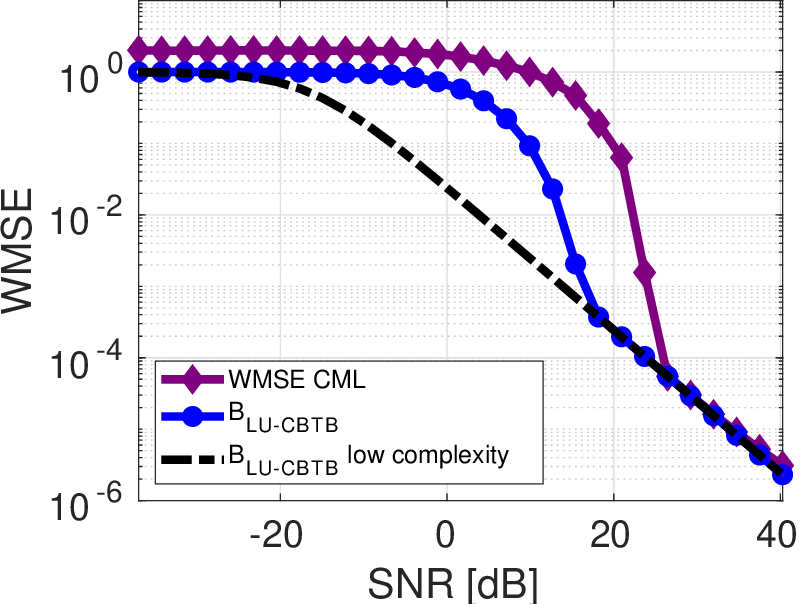}
\caption{The WMSE of the CML estimator, LU-CBTB, and low-complexity LU-CBTB versus SNR for $\frac{2\pi r}{\lambda}=3.2$.
}\label{CML_MSE_alpha_thresh}
\end{figure}

\section{Conclusion} \label{sec:Conclusion}
In this paper, we derived the LU-CBTB for constrained parameter estimation. This bound is a constrained Barankin-type bound that requires Lehmann-unbiasedness, rather than the conventional but restrictive mean-unbiasedness.
The LU-CBTB is a lower bound on the WMSE, and thus, depends on the weighting matrix in the general case. 
It was shown that the LU-CBTB is tighter than or equal to the recently proposed LU-CCRB. The proposed bound coincides with the conventional CBTB for linear constraints, while for nonlinear constraints it may be a useful performance benchmark in cases where the CBTB is not.
Generally, the LU-CBTB is lower than the CBTB, not because it is a loose performance bound but because the CBTB is not a lower bound for common estimators in many problems. In the simulations, we have considered DOA estimation of an unknown QPSK signal. In this example, Cram$\acute{\text{e}}$r–Rao-type bounds do not exist, while the LU-CBTB and the CBTB can be used. It was clearly shown that the LU-CBTB is more appropriate than the CBTB for performance analysis of the CML estimator in terms of providing reliable performance benchmarks and characterizing the CML error behavior. Topics for future work include Lehmann-unbiasedness for misspecified parameter estimation \cite{THANH_MISSP,CCRB_misspecified}, random parametric constraints \cite{PREVOST_CHAUMETTE_RANDOM}, and Bayesian parameter estimation under various parametric structures \cite{JOURNAL_EYAL,ROUTTENBERG_BAY_PER,FAUSS_KULLBACK}.

\appendices

\section{Proof of Theorem \ref{T3}}\label{App_LU_CBTB_THEOREM}
For brevity, in this appendix and in the following appendices, we remove the arguments of vector/matrix functions, except for specific derivations/functions where the arguments are necessary. By using Cauchy-Schwarz inequality with the random vectors $\etavec,\xivec\in\mathbb{R}^M$, one obtains
\be\label{CS_preliminary}
{\rm{E}}[\etavec^T\etavec]{\rm{E}}[\xivec^T\xivec]\geq{\rm{E}}^2[\xivec^T\etavec].
\ee
Let $\lambdavec_m\in\mathbb{R}^{M-K},~m=0,\ldots,P$, be arbitrary vectors. By substituting 
\be
\label{eta_term}
\etavec=\Wmat^{\frac{1}{2}}(\hat{\thetavec}-\thetavec)
\ee
and 
\be
\label{xi_term}
\xivec=\Wmat^{\frac{1}{2}}\sum_{m=0}^{P}\Umat(\thetavec_m)\lambdavec_m\frac{f_\xvec(\xvec;\thetavec_m)}{f_\xvec(\xvec;\thetavec)}
\ee 
in \eqref{CS_preliminary}, where $\thetavec_0=\thetavec$, and using the linearity of the expectation operator, one obtains
\be \label{CS_start_I}
\begin{split}
&{\rm{E}}\left[(\hat{\thetavec}-\thetavec)^T\Wmat(\hat{\thetavec}-\thetavec);\thetavec\right]\sum_{m=0}^{P}\sum_{n=0}^{P}\lambdavec_m^T\Umat^T(\thetavec_m)\Wmat\Umat(\thetavec_n)\lambdavec_n\\
&\times{\rm{E}}\left[\frac{f_\xvec(\xvec;\thetavec_m)}{f_\xvec(\xvec;\thetavec)}\frac{f_\xvec(\xvec;\thetavec_n)}{f_\xvec(\xvec;\thetavec)};\thetavec\right]\\
&\geq\bigg(\sum_{m=0}^{P}\lambdavec_m^T\Umat^T(\thetavec_m)\Wmat{\rm{E}}\left[(\hat{\thetavec}-\thetavec)\frac{f_\xvec(\xvec;\thetavec_m)}{f_\xvec(\xvec;\thetavec)};\thetavec\right]\bigg)^2.
\end{split}
\ee
By substituting \eqref{WMSE} and \eqref{Bmat_BTB_elements} in \eqref{CS_start_I}, one obtains
\be \label{CS_start}
\begin{split}
&{\text{WMSE}}_{\hat{\thetavecsmall}}(\thetavec) \sum_{m=0}^{P}\sum_{n=0}^{P}\lambdavec_m^T\Umat^T(\thetavec_m)\Wmat\Umat(\thetavec_n)\lambdavec_n B_{m,n}\\
&\geq\bigg(\sum_{m=0}^{P}\lambdavec_m^T\Umat^T(\thetavec_m)\Wmat{\rm{E}}\left[(\hat{\thetavec}-\thetavec)\frac{f_\xvec(\xvec;\thetavec_m)}{f_\xvec(\xvec;\thetavec)};\thetavec\right]\bigg)^2,
\end{split}
\ee
where for the sake of simplicity of presentation, in this appendix we use the notation
$B_{0,0}=1$.
Using \eqref{Dmat_block_define} and the augmented vector
\be\label{Su_cvec_define_W}
\lambdavec\define[\lambdavec_0^T,\ldots,\lambdavec_{P}^T]^T,
\ee
we can write the right term in the left hand side (l.h.s.) of \eqref{CS_start} as
\be\label{CS_step_2_Dmat}
\sum_{m=0}^{P}\sum_{n=0}^{P}\lambdavec_m^T\Umat^T(\thetavec_m)\Wmat\Umat(\thetavec_n)\lambdavec_n B_{m,n}=\lambdavec^T\Dmat\lambdavec.
\ee 
Substituting \eqref{CS_step_2_Dmat} in \eqref{CS_start} yields
\be \label{CS_step_1}
\begin{split}
&{\text{WMSE}}_{\hat{\thetavecsmall}}(\thetavec)\cdot(\lambdavec^T\Dmat\lambdavec)\\
&\geq\left(\sum_{m=0}^{P}\lambdavec_m^T\Umat^T(\thetavec_m)\Wmat{\rm{E}}\left[(\hat{\thetavec}-\thetavec)\frac{f_\xvec(\xvec;\thetavec_m)}{f_\xvec(\xvec;\thetavec)};\thetavec\right]\right)^2.
\end{split}
\ee
In addition, by using the C-unbiasedness condition at $\thetavec_0=\thetavec$, we obtain
\beqna
\label{zero_theta}
\Umat^T(\thetavec_0)\Wmat{\rm{E}}\left[(\hat{\thetavec}-\thetavec)\frac{f_\xvec(\xvec;\thetavec_0)}{f_\xvec(\xvec;\thetavec)};\thetavec\right]\hspace{2.5cm}\nonumber\\
=\Umat^T(\thetavec)\Wmat{\rm{E}}\left[\hat{\thetavec}-\thetavec;\thetavec\right]=\zerovec.
\eeqna
Now, by using $\hat{\thetavec}-\thetavec=\hat{\thetavec}-\thetavec_m+\thetavec_m-\thetavec$, it can be verified that
\be\label{factorization}
\begin{split}
&\Umat^T(\thetavec_m)\Wmat{\rm{E}}\left[(\hat{\thetavec}-\thetavec)\frac{f_\xvec(\xvec;\thetavec_m)}{f_\xvec(\xvec;\thetavec)};\thetavec\right]\\
&=\Umat^T(\thetavec_m)\Wmat{\rm{E}}\left[(\hat{\thetavec}-\thetavec_m)\frac{f_\xvec(\xvec;\thetavec_m)}{f_\xvec(\xvec;\thetavec)};\thetavec\right]\\
&~~~+\Umat^T(\thetavec_m)\Wmat(\thetavec_m-\thetavec){\rm{E}}\left[\frac{f_\xvec(\xvec;\thetavec_m)}{f_\xvec(\xvec;\thetavec)};\thetavec\right]\\
&=\Umat^T(\thetavec_m)\Wmat{\rm{E}}\left[\hat{\thetavec}-\thetavec_m;\thetavec_m\right]+\tvec^{(m)}\\
&=\tvec^{(m)},~\forall m=1,\ldots,P,
\end{split}
\ee
where the second equality is obtained by substituting \eqref{tvec_block_define} and using the fact that ${\rm{E}}\left[\frac{f_\xvec(\xvec;\thetavecsmall_m)}{f_\xvec(\xvec;\thetavecsmall)};\thetavec \right]=1$. 
The last equality is obtained by substituting the C-unbiasedness condition from \eqref{point_unbiased_cond_nec}. 
The advantage of the last term in \eqref{factorization} is that it is not a function of the estimator, as long as $\hat{\thetavec}$ is a C-unbiased estimator.
Substituting \eqref{zero_theta} and \eqref{factorization}  in \eqref{CS_step_1}, one obtains
\be \label{CS_step_2}
\begin{split}
{\text{WMSE}}_{\hat{\thetavecsmall}}(\thetavec)\cdot(\lambdavec^T\Dmat\lambdavec)\geq\left(\sum_{m=1}^{P}\lambdavec_m^T\tvec^{(m)}\right)^2\\=(\lambdavec^T\cdot [\zerovec^T, \tvec^T]^T)^2,
\end{split}
\ee
where the last equality is obtained by substituting \eqref{Su_cvec_define_W} and since $\tvec^{(m)}$ is the $m$th subvector of $\tvec$, $\forall m=1,\ldots,P$. The size of the zero vector in \eqref{CS_step_2} is
$M-K$.
 By using an extension of Cauchy-Schwarz inequality \cite[Eq. (2.37)]{PECARIC}, it can be verified that the tightest WMSE lower bound w.r.t. $\lambdavec$ that can be obtained from \eqref{CS_step_2} is for the choice
\be\label{lambda_opt_CS}
\lambdavec=\Dmat^{\dagger}\cdot [\zerovec^T, \tvec^T]^T.
\ee
Using \eqref{Su_cvec_define_W},
\eqref{lambda_opt_CS} can be written as
\beqna\label{lambda_opt_CS_m}
\lambdavec_0=\zerovec\hspace{4.95cm}\nonumber\\
\lambdavec_m=\left[\Dmat^{\dagger}\tvec\right]_{(m(M-K)+1):((m+1)(M-K))},
\eeqna
$\forall m=1,\ldots,P$. 
By substituting \eqref{lambda_opt_CS} in \eqref{CS_step_2} and using pseudo-inverse properties, we obtain
\be \label{total_CR_bound_CS_pre_W_app}
{\text{WMSE}}_{\hat{\thetavecsmall}}(\thetavec)\geq 
 [\zerovec^T, \tvec^T]\Dmat^{\dagger}[\zerovec^T, \tvec^T]^T.
\ee
It can be seen from \eqref{total_CR_bound_CS_pre_W_app} that only the lower-right $P(M-K)\times P(M-K)$ block of $\Dmat^{\dagger}$ affects the bound. Thus, we can write \eqref{total_CR_bound_CS_pre_W_app} as
the LU-CBTB from \eqref{total_CR_bound_CS_pre_W} when maximized w.r.t. the test-point vectors.\\

The Cauchy–Schwarz condition for equality in \eqref{CS_preliminary} with $\etavec$, $\xivec$,  and $\lambdavec_m$ from  \eqref{eta_term}, \eqref{xi_term}, and \eqref{lambda_opt_CS_m}, respectively, yields
\be \label{equality_cond_Prop_W_append}
\begin{split}
\Wmat^{\frac{1}{2}}(\hat{\thetavec}-\thetavec)&=\psi\Wmat^{\frac{1}{2}}\sum_{m=1}^{P}\bigg(\Umat(\thetavec_m)\frac{f_\xvec(\xvec;\thetavec_m)}{f_\xvec(\xvec;\thetavec)}\\
&~~~\times\bigg[[\Dmat^{\dagger}]_{LR} \tvec\bigg]_{((m-1)(M-K)+1):(m(M-K))}\bigg),
\end{split}
\ee
where 
$[ \Dmat^{\dagger}]_{LR}$ is the lower-right $P(M-K)\times P(M-K)$ block of $\Dmat^{\dagger}$.
for some scalar $\psi$ that may be parameter-dependent. It can be verified that in order for $\hat{\thetavec}$ from \eqref{equality_cond_Prop_W_append} to satisfy the C-unbiasedness conditions from \eqref{point_unbiased_cond_nec}, we must require 
\be\label{Su_zeta_solution}
\psi=1.
\ee
By inserting \eqref{Su_zeta_solution} into \eqref{equality_cond_Prop_W_append}, one obtains the equality condition from \eqref{equality_cond_Prop_W}.

\section{Proof of Proposition \ref{order_relation_prop}}\label{App_LU_CBTB_LU_CCRB_prop}
For the sake of simplicity we assume throughout this proof that the matrices $\Dmat(\thetavec,\Wmat,\Pimat)$, $\Umat^T(\thetavec)\Wmat\Umat(\thetavec)$, and $(\Umat^T(\thetavec)\Wmat\Umat(\thetavec))\otimes(\Umat^T(\thetavec)\Jmat(\thetavec)\Umat(\thetavec))+\Cmat_{\Umat,\Wmat}(\thetavec)$ are invertible. This assumption is for the sake of simplicity and is not a necessary condition for \eqref{order_relation} to hold.\\
\indent
In a similar manner to \cite{Hero_constraint}, we set $P=M-K$ and choose the test-point vectors
\be\label{tp_choice_proof}
\thetavec_m=\thetavec+\tau\uvec_m,
\ee
where $\tau\to0$ and $\uvec_m$ is the $m$th column of $\Umat(\thetavec)$, $\forall m=1,\ldots,M-K$. Thus, we choose the $M-K$ test-point vectors to be the true parameter vector with infinitesimal changes in the feasible directions \cite{BenHaim,sparse_con,normC,JOURNAL_CONSTRAINTS_EYAL} of the constrained set, $\Theta_\fvec$, at $\thetavec$. 

By substituting \eqref{tp_choice_proof} in \eqref{Dmat_c_block_define_boddy}, we obtain
that for this specific choice of test-point vectors, the matrix $[ \Dmat^{\dagger}(\thetavec,\Wmat,\Pimat)]_{LR}\in\mathbb{R}^{(M-K)^2\times (M-K)^2}$ from the LU-CBTB in \eqref{total_CR_bound_CS_pre_W}
can be obtained as the inverse of the
  block matrix 
whose $(m,n)$th block is given by
\be\label{Dmat_c_block_define}
\begin{split}
&\Dmat_c^{(m,n)}=\Umat^T(\thetavec+\tau\uvec_{m})\Wmat\Umat(\thetavec+\tau\uvec_{n})B_{m,n}\\
&~~~-\Umat^T(\thetavec+\tau\uvec_{m})\Wmat\Umat(\thetavec)(\Umat^T(\thetavec)\Wmat\Umat(\thetavec))^{-1}\\
&~~~\times\Umat^T(\thetavec)\Wmat\Umat(\thetavec+\tau\uvec_{n})
\\&=\Umat^T(\thetavec+\tau\uvec_{m})\Wmat\Umat(\thetavec+\tau\uvec_{n})(B_{m,n}-1)\\
&+\Umat^T(\thetavec+\tau\uvec_{m})\Big(\Wmat-\Wmat\Umat(\thetavec)[\Umat^T(\thetavec)\Wmat\Umat(\thetavec)]^{-1}\\
&\times\Umat^T(\thetavec)\Wmat\Big)\Umat(\thetavec+\tau\uvec_{n})\\
&=\Umat^T(\thetavec+\tau\uvec_{m})\Wmat\Umat(\thetavec+\tau\uvec_{n})(B_{m,n}-1)\\
&+\Big(\Umat(\thetavec+\tau\uvec_{m})-\Umat(\thetavec)\Big)^T\Big(\Wmat-\Wmat\Umat(\thetavec)\\
&\times[\Umat^T(\thetavec)\Wmat\Umat(\thetavec)]^{-1}\Umat^T(\thetavec)\Wmat\Big)\Big(\Umat(\thetavec+\tau\uvec_{n})+\Umat(\thetavec)\Big),
\end{split}
\ee
where the last equality is obtained since
\be\label{W_U_zero}
\Big(\Wmat-\Wmat\Umat(\thetavec)[\Umat^T(\thetavec)\Wmat\Umat(\thetavec)]^{-1}\Umat^T(\thetavec)\Wmat\Big)\Umat(\thetavec)=\zerovec.
\ee
\indent
First, we analyze the first term in the r.h.s. of \eqref{Dmat_c_block_define}. 
By using directional derivative properties \cite[p. 527]{CONVEX_BOYD}, it can be verified that
\be\label{C_directional_deriv}
\underset{\tau\to0}{\lim}\frac{1}{\tau}\Big(\frac{f_\xvec(\xvec;\thetavec+\tau\uvec_{m})}{f_\xvec(\xvec;\thetavec)}-1\Big)=\uvec_{m}^T\upsilonvec,~\forall m=1,\ldots,M-K,
\ee
where $\upsilonvec$ is defined in \eqref{l_define}. In addition, by using the linearity of the expectation operator on the right hand side (r.h.s.) of \eqref{Bmat_BTB_elements}, it can be shown that
\be\label{Cmat_BTB_elements}
B_{m,n}-1={\rm{E}}\bigg[\bigg(\frac{f_\xvec(\xvec;\thetavec_m)}{f_\xvec(\xvec;\thetavec)}-1\bigg)\bigg(\frac{f_\xvec(\xvec;\thetavec_n)}{f_\xvec(\xvec;\thetavec)}-1\bigg);\thetavec\bigg].
\ee
$\forall m,n=1,\ldots,M-K$. Thus, under Condition \ref{cond3CRB} and by substituting \eqref{C_directional_deriv} in \eqref{Cmat_BTB_elements}, we obtain
\be\label{Cmat_BTB_elements_lim}
\underset{\tau\to0}{\lim}\frac{1}{\tau^2}(B_{m,n}-1)=\uvec_{m}^T{\rm{E}}[\upsilonvec\upsilonvec^T;\thetavec]\uvec_{n}=\uvec_{m}^T\Jmat(\thetavec)\uvec_n,
\ee
where the second equality stems from \eqref{FIM}. Consequently, using \eqref{Cmat_BTB_elements_lim}, we obtain
\be\label{D_o_lim_m_n}
\begin{split}
&\underset{\tau\to0}{\lim}\frac{1}{\tau^2}\Umat^T(\thetavec+\tau\uvec_{m})\Wmat\Umat(\thetavec+\tau\uvec_{n})(B_{m,n}-1)\\
&=\uvec_{m}^T\Jmat(\thetavec)\uvec_n(\Umat^T(\thetavec)\Wmat\Umat(\thetavec)),
\end{split}
\ee
which is the $(m,n)$th block of the matrix $(\Umat^T(\thetavec)\Jmat(\thetavec)\Umat(\thetavec))\otimes(\Umat^T(\thetavec)\Wmat\Umat(\thetavec))$. Consider the vec-permutation matrix, $\Smat_{M-K}\in\mathbb{R}^{(M-K)^2\times (M-K)^2}$ \cite{HENDERSON_VEC}. This matrix satisfies 
\be\label{equal_vec}
\Smat_{M-K}=\Smat_{M-K}^T=\Smat_{M-K}^{-1},
\ee
\be\label{transpose_vec}
{\rm{vec}}(\Amat_1)=\Smat_{M-K}{\rm{vec}}(\Amat_1^T),
\ee
and
\be\label{kron_vec}
\Amat_1\otimes\Amat_2=\Smat_{M-K}(\Amat_2\otimes\Amat_1)\Smat_{M-K}^T,
\ee
for any $\Amat_1,\Amat_2\in\mathbb{R}^{(M-K)\times (M-K)}$ \cite{HENDERSON_VEC}. Using \eqref{kron_vec}, we can write
\be\label{D_o_lim}
\begin{split}
&(\Umat^T(\thetavec)\Jmat(\thetavec)\Umat(\thetavec))\otimes(\Umat^T(\thetavec)\Wmat\Umat(\thetavec))\\
&=\Smat_{M-K}\left((\Umat^T(\thetavec)\Wmat\Umat(\thetavec))\otimes(\Umat^T(\thetavec)\Jmat(\thetavec)\Umat(\thetavec))\right)\Smat_{M-K}^T.
\end{split}
\ee

Now, we analyze the second term in the r.h.s. of \eqref{Dmat_c_block_define}.  By using directional derivative properties \cite[p. 527]{CONVEX_BOYD}, it can be verified that
\be\label{U_directional_deriv}
\underset{\tau\to0}{\lim}\frac{1}{\tau}(\Umat(\thetavec+\tau\uvec_{m})-\Umat(\thetavec))=[\Vmat_1\uvec_{m},\ldots,\Vmat_{M-K}\uvec_{m}], 
\ee
where $\Vmat_m$ is defined in \eqref{V_define}, $\forall m=1,\ldots,M-K$. Using \eqref{U_directional_deriv}, we obtain
\be\label{Dmat_c_block_define_lim}
\begin{split}
&\underset{\tau\to0}{\lim}\frac{1}{\tau^2}\Big(\Umat(\thetavec+\tau\uvec_{m})-\Umat(\thetavec)\Big)^T\Big(\Wmat-\Wmat\Umat(\thetavec)\\
&\times(\Umat^T(\thetavec)\Wmat\Umat(\thetavec))^{-1}\Umat^T(\thetavec)\Wmat\Big)\Big(\Umat(\thetavec+\tau\uvec_{n})+\Umat(\thetavec)\Big)\\
&=[\Vmat_1\uvec_{m},\ldots,\Vmat_{M-K}\uvec_{m}]^T\\
&\times(\Wmat-\Wmat\Umat(\thetavec)(\Umat^T(\thetavec)\Wmat\Umat(\thetavec))^{-1}\Umat^T(\thetavec)\Wmat)\\
&\times[\Vmat_1\uvec_{n},\ldots,\Vmat_{M-K}\uvec_{n}]\\
& \define \Hmat^{(m,n)}.
\end{split}
\ee
Define the matrix $\Hmat\in\mathbb{R}^{(M-K)^2\times (M-K)^2}$ as a block matrix, whose $(m,n)$th block is given by $\Hmat^{(m,n)}$.
We can express $\Hmat$ in a full matrix form as follows
\be\label{Hmat_tao_matrix_define}
\begin{split}
&\Hmat=\begin{bmatrix}
[\Vmat_1\uvec_{1},\ldots,\Vmat_{M-K}\uvec_{1}]^T\\
\vdots\\
[\Vmat_1\uvec_{M-K},\ldots,\Vmat_{M-K}\uvec_{M-K}]^T
\end{bmatrix}\\
&\times(\Wmat-\Wmat\Umat(\thetavec)[\Umat^T(\thetavec)\Wmat\Umat(\thetavec)]^{-1}\Umat^T(\thetavec)\Wmat)\\
&\times[\Vmat_1\uvec_{1},..,\Vmat_{M-K}\uvec_{1},\ldots,\Vmat_1\uvec_{M-K},..,\Vmat_{M-K}\uvec_{M-K}].
\end{split}
\ee
Using \eqref{transpose_vec}, we obtain the following equality
\be\label{Hmat_Cmat_block_define}
\begin{split}
&\begin{bmatrix}[\Vmat_1\uvec_{1},\ldots,\Vmat_{M-K}\uvec_{1}]^T\\
\vdots\\
[\Vmat_1\uvec_{M-K},\ldots,\Vmat_{M-K}\uvec_{M-K}]^T
\end{bmatrix}\\
&=\Smat_{M-K}\begin{bmatrix}
\Umat^T(\thetavec)\Vmat_1^T\\
\vdots\\
\Umat^T(\thetavec)\Vmat_{M-K}^T
\end{bmatrix}.
\end{split}
\ee
By plugging \eqref{Hmat_Cmat_block_define} and \eqref{Gamma_block_define} into \eqref{Hmat_tao_matrix_define}, one obtains
\be\label{Hmat_tao_limit}
\Hmat=\Smat_{M-K}\Cmat_{\Umat,\Wmat}\Smat_{M-K}^T.
\ee
Substitution of \eqref{D_o_lim_m_n}, \eqref{D_o_lim}, \eqref{Dmat_c_block_define_lim}, and \eqref{Hmat_tao_limit} into the matrix form of \eqref{Dmat_c_block_define} yields
\be\label{Dmat_c_block_define_matrix}
\begin{split}
&\underset{\tau\to0}{\lim} \frac{1}{\tau^2}\Dmat_c\\
&=\Smat_{M-K}\left((\Umat^T(\thetavec)\Wmat\Umat(\thetavec))\otimes(\Umat^T(\thetavec)\Jmat(\thetavec)\Umat(\thetavec))\right)\Smat_{M-K}^T\\
&~~~+\Smat_{M-K}\Cmat_{\Umat,\Wmat}\Smat_{M-K}^T.
\end{split}
\ee

By inserting the specific choice of test-point vectors from \eqref{tp_choice_proof} into \eqref{tvec_block_define}, one obtains
that $\tvec\in\mathbb{R}^{(M-K)^2}$ is composed of $M-K$ subvectors:
\be\label{tvec_block_define_App_m}
\tvec^{(m)}=\tau\Umat^T(\thetavec+\tau\uvec_{m})\Wmat\uvec_m,~\forall ~m=1,\ldots,P.
\ee
Taking the limit $\tau\to0$ of $\frac{1}{\tau}\tvec$ based on the definition in \eqref{tvec_block_define_App_m}, results in
\beqna\label{t_c_tau_0_first}
\underset{\tau\to0}{\lim}\frac{1}{\tau}\tvec 
={\rm{vec}}(\Umat^T(\thetavec)\Wmat\Umat(\thetavec))\hspace{1.1cm}
\nonumber\\=\Smat_{M-K}{\rm{vec}}(\Umat^T(\thetavec)\Wmat\Umat(\thetavec)),
\eeqna
where the second equality stems from \eqref{transpose_vec}.

Finally, the LU-CBTB from \eqref{total_CR_bound_CS_pre_W} for the specific choice of test-point vectors from \eqref{tp_choice_proof}, denoted by $\tilde{B}_{\text{LU-CBTB}}$, is given by 
\be \label{total_CR_bound_CS_pre_W_CRB}
\tilde{B}_{\text{LU-CBTB}}=\frac{1}{\tau}\tvec^T\left(\frac{1}{\tau^2 }\Dmat_c\right)^{-1}\frac{1}{\tau}\tvec_c.
\ee
By inserting \eqref{t_c_tau_0_first} and \eqref{Dmat_c_block_define_matrix} into 
\eqref{total_CR_bound_CS_pre_W_CRB}, we get
\be \label{total_CR_bound_CS_pre_W_lim}
\begin{split}
&\underset{\tau\to0}{\lim}\tilde{B}_{\text{LU-CBTB}}={\rm{vec}}^T\left(\Umat^T(\thetavec)\Wmat\Umat(\thetavec)\right)\Smat_{M-K}^T\\
&\times\bigg(\Smat_{M-K}\left([\Umat^T(\thetavec)\Wmat\Umat(\thetavec)]\otimes[\Umat^T(\thetavec)\Jmat(\thetavec)\Umat(\thetavec)]\right)\Smat_{M-K}^T\\
&+\Smat_{M-K}\Cmat_{\Umat,\Wmat}\Smat_{M-K}^T\bigg)^{-1}\Smat_{M-K}{\rm{vec}}\left(\Umat^T(\thetavec)\Wmat\Umat(\thetavec)\right).
\end{split}
\ee
Using \eqref{equal_vec}, we can rewrite \eqref{total_CR_bound_CS_pre_W_lim} as
\be \label{total_CR_bound_CS_pre_W_res}
\begin{split}
&\underset{\tau\to0}{\lim}\tilde
{B}_{\text{LU-CBTB}}={\rm{vec}}^T\left(\Umat^T(\thetavec)\Wmat\Umat(\thetavec)\right)\\
&\times\bigg(\left([\Umat^T(\thetavec)\Wmat\Umat(\thetavec)]\otimes[\Umat^T(\thetavec)\Jmat(\thetavec)\Umat(\thetavec)]\right)+\Cmat_{\Umat,\Wmat}\bigg)^{-1}\\
&\times{\rm{vec}}\left(\Umat^T(\thetavec)\Wmat\Umat(\thetavec)\right)=B_{\text{LU-CCRB}},
\end{split}
\ee
where the second equality stems from \eqref{total_CR_bound_CS}.

The choice of test-point vectors, $\{\thetavec,\thetavec+\tau\uvec_1,\ldots,\thetavec+\tau\uvec_{M-K}\},~\tau\to0$, may not be the choice that maximizes the LU-CBTB and thus, \eqref{order_relation} is obtained.

\section{Proof of Proposition \ref{CBTB_order_prop}}\label{App_CBTB_order_prop}
Let $\tilde{B}_{\text{CBTB}}$ and $\tilde{B}_{\text{LU-CBTB}}$ denote the CBTB and the LU-CBTB from \eqref{CMS_W} and \eqref{total_CR_bound_CS_pre_W}, respectively, for a fixed choice of test-point vectors. Consider the estimator
\be \label{BTB_eff}
\hat\thetavec_{\text{CBTB}}=\thetavec+\Tmat\left(\Bmat-\onevec \onevec^T\right)^{\dagger}\begin{bmatrix}
\frac{f_\xvec(\xvec;\thetavecsmall_1)-f_\xvec(\xvec;\thetavecsmall)}{f_\xvec(\xvec;\thetavecsmall)}\\
\vdots\\
\frac{f_\xvec(\xvec;\thetavecsmall_P)-f_\xvec(\xvec;\thetavecsmall)}{f_\xvec(\xvec;\thetavecsmall)}
\end{bmatrix}.
\ee
It should be noted that this estimator is a function of $\thetavec$, i.e. it is defined separately for each fixed value of $\thetavec\in\Theta_\fvec$. Let $m\in\{1,\ldots,P\}$. By using \eqref{BTB_eff} and pseudo-inverse properties, we obtain
\be \label{point_unbiased}
{\rm{E}}[\hat{\thetavec}_{\text{CBTB}}-\thetavec_m;\thetavec_m]=\thetavec-\thetavec_m+\Tmat(\Bmat-\onevec \onevec^T)^{\dagger}(\bvec_{m}-\onevec).
\ee
where $\bvec_{m}$ is the $m$th column of $\Bmat$. Under Condition \ref{cond2CBTB} and by using pseudo-inverse properties, we obtain
\be\label{T_B_relation}
\Tmat((\Bmat -\onevec \onevec^T)^{\dagger}(\Bmat
-\onevec \onevec^T)-\Imat_{P})=\zerovec.
\ee
It can be observed that the vector on the r.h.s. of \eqref{point_unbiased} is the $m$th column of the matrix on the l.h.s. of \eqref{T_B_relation}, $\forall m=1,\ldots,P$. Thus, by substituting \eqref{T_B_relation} in \eqref{point_unbiased}, we get
\be \label{point_unbiased_pr}
{\rm{E}}[\hat{\thetavec}_{\text{CBTB}}-\thetavec_m;\thetavec_m]=\zerovec,~\forall m=1,\ldots,P.
\ee
It can be seen from \eqref{point_unbiased_pr} that the estimator from \eqref{BTB_eff} is pointwise mean-unbiased at the test-point vectors $\thetavec_1,\ldots,\thetavec_P\in\Theta_\fvec$, which implies that this estimator is also pointwise C-unbiased at $\thetavec_1,\ldots,\thetavec_P\in\Theta_\fvec$.
In particular,
by substituting \eqref{point_unbiased_pr} in the l.h.s. of \eqref{point_unbiased_cond_nec}, it can be observed that the estimator from \eqref{BTB_eff} satisfies \eqref{point_unbiased_cond_nec}.
In addition, it can be verified that the estimator from \eqref{BTB_eff} is point-wise mean-unbiased at $\thetavec$.
Therefore, for the specific choice of test-point vectors,
\be \label{pointwise_LU_CBTB}
{\text{WMSE}}_{\hat{\thetavecsmall}_{\text{CBTB}}}\geq \tilde{B}_{\text{LU-CBTB}}.
\ee

On the other hand, evaluating the WMSE, defined in \eqref{WMSE}, of $\hat\thetavec_{\text{CBTB}}$ from \eqref{BTB_eff} for the specific choice of test-point vectors yields
\beqna \label{pointwise_WMSE_CBTB}
{\text{WMSE}}_{\hat{\thetavecsmall}_{\text{CBTB}}}
={\rm{Tr}}\left({\rm{E}}[(\hat{\thetavec}_{\text{CBTB}}-\thetavec)(\hat{\thetavec}_{\text{CBTB}}-\thetavec)^T;\thetavec]\Wmat\right)
\nonumber\\
={\rm{Tr}}(\Tmat(\Bmat -\onevec \onevec^T)^{\dagger}\Tmat^T\Wmat)=\tilde{B}_{\text{CBTB}},\hspace{0.8cm}
\eeqna
where the second equality is obtained by substituting \eqref{Bmat_BTB_elements},  and using pseudo-inverse properties, and the last equality is obtained by substituting \eqref{CMS_W}.
By substituting \eqref{pointwise_WMSE_CBTB} in \eqref{pointwise_LU_CBTB}, one obtains that for any fixed choice of test-point vectors
\be\label{order_CBTBs_App}
\tilde{B}_{\text{LU-CBTB}}\leq \tilde{B}_{\text{CBTB}}
\ee
and consequently, \eqref{order_CBTBs} is obtained.

\section{Proof of Proposition \ref{linear_constraints_prop}}\label{App_linear_constraints_prop}
Let $\tilde{B}_{\text{CBTB}}$ and $\tilde{B}_{\text{LU-CBTB}}$ denote the CBTB and the LU-CBTB from \eqref{CMS_W} and \eqref{total_CR_bound_CS_pre_W}, respectively, for a fixed choice of test-point vectors. Under the linear constraint in \eqref{linear_constr_eq}, the unknown parameter vector belongs to the null space of $\Amat$, i.e. $\thetavec\in{\mathcal{N}}(\Amat)$. In this case $\Fmat(\thetavec)=\Amat$, which is not a function of $\thetavec$. Therefore, the orthonormal complement matrix $\Umat$ is not a function of $\thetavec$ as well. According to \eqref{set_def}-\eqref{two}, it can be seen that 
\be\label{linear_param_space}
\Theta_\fvec={\mathcal{R}}(\Umat) 
\ee
and thus,
\be\label{U_linear_equality}
\thetavec=\Pmat_\Umat\thetavec+\Pmat_\Umat^\bot\thetavec=\Pmat_\Umat\thetavec=\Umat\Umat^T\thetavec,~\forall\thetavec\in\Theta_\fvec,
\ee
where the second equality stems from \eqref{linear_param_space} and the last equality stems from the definition of orthogonal projection matrix. From \eqref{U_linear_equality}, we obtain
\be\label{U_linear_equality_T}
\Umat\Umat^T(\thetavec_m-\thetavec)=\thetavec_m-\thetavec,
\ee
where $\thetavec_m-\thetavec$ is the $m$th column of $\Tvec$ from \eqref{T_BTB}, $\forall m=1,\ldots,P$. Consequently, we get
\be\label{U_T_0_linear_equality}
\Umat\Umat^T\Tmat=\Tmat.
\ee
Since the matrix $\Umat$ is not a function of $\thetavec$, by using \eqref{Dmat_block_define}, Kronecker product rules, and pseudo-inverse properties, we can express the matrix $\Dmat^{\dagger}$ as follows
\be\label{D_W_linear}
\Dmat^{\dagger}=\left[\begin{array}{cc} 1 &\onevec^T\\
\onevec & \Bmat
\end{array}\right]^{\dagger}\otimes (\Umat^T\Wmat\Umat)^{\dagger}.
\ee
By substituting \eqref{U_linear_equality_T} in \eqref{tvec_block_define}, one obtains
\be\label{tvec_block_define_AppD}
\tvec^{(m)}=(\Umat^T\Wmat\Umat)\Umat^T(\thetavec_m-\thetavec),
\ee
which is the the $m$th column of the matrix $(\Umat^T\Wmat\Umat)\Umat^T\Tmat,~\forall m=1,\ldots,P$. Consequently, we can express the vector $\tvec$ as follows
\be\label{t_W_linear}
\tvec={\rm{vec}}((\Umat^T\Wmat\Umat)\Umat^T\Tmat).
\ee
By inserting \eqref{D_W_linear} and \eqref{t_W_linear} into $\tilde{B}_{\text{LU-CBTB}}$ from \eqref{total_CR_bound_CS_pre_W} (without the maximization w.r.t. the test-point vectors) under the linear constraints, we obtain
\be \label{total_CR_bound_CS_pre_W_linear}
\begin{split}
\tilde{B}_{\text{LU-CBTB}}&={\rm{vec}}^T((\Umat^T\Wmat\Umat)\Umat^T\Tmat)((\Bmat-\onevec \onevec^T)^{\dagger}\otimes (\Umat^T\Wmat\Umat)^{\dagger})\\
&~~~\times{\rm{vec}}((\Umat^T\Wmat\Umat)\Umat^T\Tmat).
\end{split}
\ee
By substituting the equality \cite[p. 60]{MATRIX_COOKBOOK}
\be\label{kronecker_prop}
{\rm{vec}}^T(\Amat_4)(\Amat_2^T\otimes\Amat_1){\rm{vec}}(\Amat_3)={\rm{Tr}}(\Amat_4^T\Amat_1\Amat_3\Amat_2)
\ee
with $\Amat_1=(\Umat^T\Wmat\Umat)^{\dagger}$, $\Amat_2=(\Bmat-\onevec \onevec^T)^{\dagger}$, and $\Amat_3=\Amat_4=(\Umat^T\Wmat\Umat)\Umat^T\Tmat$, in \eqref{total_CR_bound_CS_pre_W_linear}, one obtains
\be \label{total_CR_bound_CS_pre_W_linear_substitute}
\begin{split}
&\tilde{B}_{\text{LU-CBTB}}\\
&={\rm{Tr}}\left(\Tmat^T\Umat(\Umat^T\Wmat\Umat)
\right.\\ & \times \left.(\Umat^T\Wmat\Umat)^{\dagger}(\Umat^T\Wmat\Umat)\Umat^T\Tmat(\Bmat-\onevec \onevec^T)^{\dagger}\right)\\
&={\rm{Tr}}\left(\Tmat(\Bmat-\onevec \onevec^T)^{\dagger}\Tmat^T\Wmat\right)=\tilde{B}_{\text{CBTB}},
\end{split}
\ee
where the second equality stems from \eqref{U_T_0_linear_equality} and using pseudo-inverse and trace properties. The last equality stems from \eqref{CMS_W} for the fixed choice of test-point vectors. Thus, from \eqref{total_CR_bound_CS_pre_W_linear_substitute} we get \eqref{linear_constraint_equality}.

\section{Proof of Claim \ref{claim_sim}}
\label{new_app}
By substituting 
the test-point vectors from \eqref{tp_nu}-\eqref{tp_alpha} into \eqref{Bmat_BTB_elements} and using Eq. (16) from \cite{TABRIKIAN_OCEAN}, we obtain that the matrix $\Bmat$ is given by:
\beqna
\Bmat\define
\begin{bmatrix}
\bar{B}_{2,2} & \bar{B}_{2,3} & \bar{B}_{2,4}\\
 \bar{B}_{2,3}& \bar{B}_{3,3} & \bar{B}_{3,4}\\
 \bar{B}_{2,4}& \bar{B}_{3,4}& \bar{B}_{4,4}
\end{bmatrix}+\onevec \onevec^T,
\eeqna
where 
\begin{subequations}
\be\label{Bmat_22}
\bar{B}_{2,2}=\gamma (-2\alpha,-2\alpha,d_q,0),
\ee
\be\label{Bmat_23}
\begin{split}
B_{2,3}=\gamma (-\alpha,-2\alpha,h_\phi,-\cos{d_q}+\cos(h_\phi-d_q)),
\end{split}
\ee
\be\label{Bmat_24}
\bar{B}_{2,4}=
\gamma (h_\alpha,-2\alpha,d_q,0),
\ee
\be\label{Bmat_33}
\bar{B}_{3,3}=\gamma (-2\alpha  ,-2\alpha,h_\phi,0),
\ee
\be\label{Bmat_34}
\bar{B}_{3,4}=\gamma (h_\alpha,-2 \alpha ,h_\phi,0),
\ee
\be\label{Bmat_44}
\bar{B}_{4,4}=\gamma (h_{\alpha}, h_{\alpha},-\pi,0) ,
\ee
\end{subequations}
in which $d_q\define
\angle a_{q}(\nu \cdot e^{jh_\nu})
-\angle a_{q}(\nu)$,
$\forall q=1,\ldots,Q$,
 $a_{q}(\nu)$ is defined in \eqref{a_q_def},
and
\begin{equation}
\gamma (\kappa_1,\kappa_2,\kappa_3,\kappa_4) 
\define 
\exp\left(\frac{ \kappa_1 \kappa_2}{\sigma^2}\sum_{q=1}^{Q}(1-\cos{\kappa_3}+\kappa_4)\right)-1. 
\end{equation}
In general, $\kappa_3$ and $\kappa_4$ are functions of $q$.
\\
\indent
Next, we derive the CBTB from \eqref{CMS_W}. By inserting the test-point vectors from \eqref{tp_nu}-\eqref{tp_alpha} in \eqref{T_BTB} and reordering, one obtains
\be\label{Tmat_DOA}
\begin{split}
\Tmat=\begin{bmatrix}
 \cos(\angle{\nu}+h_\nu)-\cos\angle{\nu} & 0 & 0\\
 \sin(\angle{\nu}+h_\nu)-\sin\angle{\nu} & 0 & 0\\
 0 & h_\phi & 0\\
 0 & 0 & h_\alpha
\end{bmatrix}.
\end{split}
\ee
By substituting $\Wmat$ from \eqref{W_choice}, the elements of $\Bmat$ from \eqref{Bmat_22}-\eqref{Bmat_44}, and \eqref{Tmat_DOA} into \eqref{CMS_W} and applying matrix inversion rules \cite{MATRIX_COOKBOOK}, we obtain a closed-form expression of the CBTB, given in \eqref{CBTB_DOA}.\\
\indent
In order to derive the LU-CBTB from \eqref{total_CR_bound_CS_pre_W}, we substitute $\Wmat$ from \eqref{W_choice}, $\thetavec_0=\thetavec$, the test-point vectors from \eqref{tp_nu}-\eqref{tp_alpha}, and the elements of $\Bmat$ into \eqref{Dmat_block_define} and use the Kronecker product definition to obtain
\beqna\label{Dmat_DOA}
\Dmat=\hspace{6cm}
\nonumber\\
\left(\left[\begin{array}{cc} 1 &\onevec^T\\
\onevec & \Bmat
\end{array}\right] \odot
\begin{bmatrix}
1 & \cos h_\nu & 1 & 1\\
\cos h_\nu & 1 & \cos h_\nu & \cos h_\nu\\
1 & \cos h_\nu &1 & 1\\
1 & \cos h_\nu & 1 & 1
\end{bmatrix}\right)
\nonumber\\
\otimes(\evec_1^{(3)}(\evec_1^{(3)})^T),
\eeqna
where $\evec_1^{(3)}=[1,0,0]^T$ is the first column of $\Imat_3$.
Then, by plugging $\Wmat$ from \eqref{W_choice} and the test-point vectors from \eqref{tp_nu}-\eqref{tp_alpha} in \eqref{tvec_block_define} and reordering, one obtains
\be\label{tvec_DOA}
\tvec={\rm{vec}}(\evec_1^{(3)}[-\sin{h_\nu},0,0]).
\ee
Substituting \eqref{Dmat_DOA} and \eqref{tvec_DOA} into \eqref{total_CR_bound_CS_pre_W} and applying matrix inversion and Kronecker product rules \cite{MATRIX_COOKBOOK}, we obtain a closed-form expression of the LU-CBTB in \eqref{LU_CBTB_DOA}.


\end{document}